\DeclareFontFamily{OMX}{MnSymbolE}{}
\DeclareSymbolFont{MnLargeSymbols}{OMX}{MnSymbolE}{m}{n}
\DeclareFontShape{OMX}{MnSymbolE}{m}{n}{
    <-6>  MnSymbolE5
   <6-7>  MnSymbolE6
   <7-8>  MnSymbolE7
   <8-9>  MnSymbolE8
   <9-10> MnSymbolE9
  <10-12> MnSymbolE10
  <12->   MnSymbolE12
}{}
\DeclareFontShape{OMX}{MnSymbolE}{b}{n}{
    <-6>  MnSymbolE-Bold5
   <6-7>  MnSymbolE-Bold6
   <7-8>  MnSymbolE-Bold7
   <8-9>  MnSymbolE-Bold8
   <9-10> MnSymbolE-Bold9
  <10-12> MnSymbolE-Bold10
  <12->   MnSymbolE-Bold12
}{}
\let\llangle\@undefined
\let\rrangle\@undefined
\DeclareMathDelimiter{\llangle}{\mathopen}%
                     {MnLargeSymbols}{'164}{MnLargeSymbols}{'164}
\DeclareMathDelimiter{\rrangle}{\mathclose}%
                     {MnLargeSymbols}{'171}{MnLargeSymbols}{'171}
\DeclareMathOperator{\wt}{wt}
\DeclareMathOperator{\pos}{pos}
\DeclareMathOperator{\he}{ht}
\DeclareMathOperator{\supp}{supp}
\DeclareMathOperator{\var}{var}
\DeclareMathOperator{\rk}{rk}
\providecommand*{\nat}[0]{\ensuremath{\mathbb N}}
\providecommand*{\seq}[3]{\ensuremath{#1_{#2}, \dotsc, #1_{#3}}}
\providecommand*{\abs}[1]{\ensuremath{\lvert #1 \rvert}}
\providecommand*{\sem}[1]{\ensuremath{\lVert #1 \rVert}}
\newcommand{\Tsigma}{T_{\Sigma}}
\newcommand{\B}{\mathbb{B}}
\newcommand{\SR}{\mathbb{S}}
\newcommand{\cA}{\mathcal{A}}
\newcommand{\cB}{\mathcal{B}}
\newcommand{\cC}{\mathcal{C}}
\newcommand{\cL}{\mathcal{L}}
\newcommand{\N}{\mathbb{N}}
\newcommand{\Z}{\mathbb{Z}}
\newcommand{\Q}{\mathbb Q}
\providecommand{\sem}[1]{\ensuremath{\llbracket #1 \rrbracket}}
\title{The Weighted HOM-Problem over Fields}
\author{Andreea-Teodora Nász}
\authorrunning{A.-T.~Nasz}
\institute{Universit\"at Leipzig, Faculty of Mathematics and Computer Science \\
  PO~box 100\,920, 04009 Leipzig, Germany \\
  \email{nasz@informatik.uni-leipzig.de}}
\begin{document} 
\maketitle

\begin{abstract}
The \emph{HOM-problem}, which asks whether the image of a regular tree language under a tree 
homomorphism is again regular, is known to be decidable.
In this paper, we prove the \emph{weighted} HOM-problem for all fields decidable, provided that the tree 
homomorphism is~\emph{tetris-free} (a condition that generalizes injectivity). To this end, we reduce the 
problem to a property of the device representing the homomorphic image in question; to prove 
this property decidable, we then derive a pumping lemma for such devices from the well-known 
pumping lemma for regular tree series over fields, proved by Berstel and Reutenauer in 1982.%
\end{abstract}
\section{Introduction}
The well-known model of finite-state automata has seen various extensions over the past decades. 
On the one hand, the qualitative evaluation of these acceptors was 
generalized to a quantitative one, leading to~\emph{weighted 
automata}~\cite{schutzenberger1961}. Such devices assign a weight to each input word, and are 
thus suited to model numerical factors related to the input, such as costs, 
probabilities and consumption of resources or time. The research community focused on automata 
theory has studied weighted automata consistently and 
fruitfully~\cite{droste2009handbook,droste2021weighted,salomaa2012automata}. 
Thereby, the favoured domains for weight calculations are often 
semirings~\cite{gol99,hebisch1998semirings},
as they are both quite general and computationally efficient due to their distributivity.

Another dimension of generalization for finite-state automata targets their input, allowing them to handle 
more complex data structures such as infinite words~\cite{infinitewords}, trees~\cite{tataok}, 
graphs~\cite{graphautomata} and pictures~\cite{rosenfeld2014picture}. In particular,
\emph{finite-state tree automata} and the~\emph{regular tree languages} they recognize were introduced 
independently in~\cite{doner1970tree,thatcher1965generalized,thatcher1968generalized}. 
These devices find applications in a variety of areas like natural language processing~\cite{jurmar08}, 
picture generation~\cite{drewes2006grammatical} and compiler construction~\cite{wilseihac13}.
Unsurprisingly, combining both types of generalizations leads to intricate yet fruitful research areas, and so 
several variants of~\emph{weighted tree automata} (WTA) and the~\emph{regular tree series} they 
recognize continue to be studied~\cite{fulvog09}. 

\emph{Tree homomorphisms} are widely used in the context of term rewriting~\cite{giltis95} and XML 
types~\cite{schwentick2007}. A tree homomorphism is a structure-preserving transformation on trees which 
can duplicate subtrees, so the trees in the homomorphic image might have identical subtrees. 
Unfortunately, tree automata have limited memory, so they cannot ensure that certain subtrees 
are equal~\cite{gecste15} (much like the classical (string) automata cannot ensure that the numbers of~$a$'s 
and of~$b$'s in a word are equal). Therefore, unlike in the word case, regular tree languages are not closed 
under tree homomorphisms. It was a long-standing open question if, given a regular tree language~$\cL$
and a tree homomorphism~$h$ as input, it is decidable whether~$h(\cL)$ is again regular. 
This~\emph{HOM-problem} was finally solved in~\cite{hom2012exp,godoy2013hom,godoy2010hom}, with the 
help of a well-studied extension called~\emph{tree automata with constraints}; these devices can explicitely 
require certain subtrees to be equal, and can thus handle the duplications performed by~$h$.
%

In the \emph{weighted} HOM-problem, a regular tree series and a tree homomorphism 
are given as input. By its nature, this question requires a customized investigation for different semirings. 
Most recently, this problem was proved decidable for different scenarios~\cite{integer-hom,nas23}, but in both 
cases, 
the semiring must be zero-sum free; this strong condition already excludes essential rings such as~$\mathbb Z$.
In this paper, we decide the weighted HOM-problem for all fields (and thus, all subspaces of fields), provided that 
the tree homomorphism is \emph{tetris-free}, a property that generalizes injectivity. 

The paper is structured as follows: In Section~\ref{sec:prelim} we represent the homomorphic 
image of the input tree series by a WTA with constraints (WTAh). 
In Section~\ref{sec:hom} we show that, if the input tree homomorphism is tetris-free, then the weighted 
HOM-problem is equivalent to a certain decidable property of this WTAh. 
Proving said decidability relies on a pumping lemma for 
WTAh over fields, which we derive in Section~\ref{sec:pumping}. 
from the well-known pumping lemma for (regular) WTA over fields~\cite{berreu82}. 
Finally, we present an example that illustrates why the approach 
is entirely unsuited for non-tetris-free tree homomorphisms.

\section{Preliminaries and Technical Background}\label{sec:prelim}
We denote the set~$\{0,1,2,\ldots\}$ by $\N$, and we
let $[k]=\{1,\ldots,k\}$ for every~$k\in\N$. Let~$A$ and $B$ be sets. 
We write~$\vert A\vert$ for the cardinality of $A$, and~$A^*$ for the set of 
finite strings over~$A$. The empty string is~$\varepsilon$ and the length of 
a string~$w$ is~$\vert w\vert$. 

A~\emph{ranked alphabet} is a pair~$(\Sigma,\rk)$ that consists of a finite set~$\Sigma$
and a rank mapping~$\rk\colon\Sigma\to\N$. For every~$k\geq 0$, we define~$\Sigma_k=
\rk^{-1}(k)$, and we sometimes write~$\sigma^{(k)}$ to indicate  that~$\sigma\in\Sigma_k$.
We often abbreviate~$(\Sigma,\rk)$ by~$\Sigma$, leaving~$\rk$ implicit. 
Let~$Z$ be a set disjoint with~$\Sigma$. The set of~\emph{$\Sigma$-trees over~$Z$},
denoted by~$T_\Sigma(Z)$, is the smallest set~$T$ that satisfies~(i) $\Sigma_0\cup Z\subseteq T$
and~(ii) $\sigma(\seq t1k)\in T$ for every~$k\in\N$, $\sigma\in\Sigma_k$ and~$\seq t1k
\in T$. We abbreviate~$T_\Sigma(\emptyset)$ simply to~$T_\Sigma$, and call any
subset~$L\subseteq T_\Sigma$ a \emph{tree language}.
Consider~$t\in T_\Sigma(Z)$. The set~$\pos(t)
\subseteq\N^*$ of~\emph{positions of~$t$} is defined by~$\pos(t)=\{\varepsilon\}$ 
for every~$t\in \Sigma_0\cup Z$, and by $\pos\big(\sigma(\seq t1k)\big)=\{\varepsilon\}
\cup\bigcup_{i\in [k]} \{ip\mid p\in\pos(t_i)\}$ for all~$k\in\N$, $\sigma\in\Sigma_k$ 
and~$\seq t1k\in T_\Sigma(Z)$. The set of positions of~$t$ inherits the lexicographic 
order~$\leq_{\text{lex}}$ from~$\N^*$.
The~\emph{size~$\vert t\vert$ of $t$} is defined by $\vert t\vert =\vert \pos(t)\vert $ and 
the \emph{height~$\he(t)$ of~$t$} by $\he(t)=\max_{p\in\pos(t)} \vert p\vert.$
For~$p\in\pos(t)$, the~\emph{label~$t(p)$ of~$t$ at~$p$}, the~\emph{subtree~$t|_p$ 
of~$t$ at~$p$} and the~\emph{substitution~$t[t']_p$ of~$t'$ into~$t$ 
at~$p$} are defined for~$t\in\Sigma_0\cup Z$ by $t(\varepsilon)=t|_\varepsilon=t$ and~$t[t']_
\varepsilon=t'$, and for $t=\sigma(\seq t1k)$ by $t(\varepsilon)=\sigma$, $\;t(ip')=
t_i(p')$, $\;t|_\varepsilon=t$, $\;t|_{ip'}=t_i|_{p'}$, $\;t[t']_\varepsilon=t'$, and finally $t[t']_{ip'}
=\sigma(\seq t1{i-1},t_i[t']_{p'},\seq t{i+1}k)$ for every~$k\in\N$, $\;\sigma\in\Sigma_k$, $\;\seq 
t1k\in T_\Sigma(Z)$, $\;i\in[k]$ and~$p'\in\pos(t_i)$.
For every subset~$S\subseteq\Sigma
\cup Z$, we let~$\pos_S(t)=\{p\in\pos(t)\mid t(p)\in S\}$ and we abbreviate~$\pos_{\{s\}}
(t)$ by~$\pos_s(t)$ for every~$s\in\Sigma\cup Z$.
Let~$X=\{x_1,x_2,\ldots\}$ be a fixed, countable set of formal variables. For $k\in\N$
we denote by~$X_k$ the subset $ \{\seq x1k\}$. For any~$t\in T_\Sigma(X)$ we let $\var(t)=
\{x\in X\mid \pos_x(t)\neq\emptyset\}.$ For~$t\in T_\Sigma(Z)$, a subset~$V\subseteq 
Z$ and a mapping $\theta\colon V\to T_\Sigma(Z)$, we define the \emph{substitution~$t\theta$ 
applied to~$t$} by~$v\theta=\theta(v)$ for~$v\in V$, $\;z\theta =z$ 
for~$z\in Z\setminus V$, and $\sigma(\seq t1k)\theta=\sigma\big(t_1\theta,\ldots,t_k
\theta\big)$ for all $k\in\N$, $\sigma\in\Sigma_k$ and~$\seq t1k\in T_\Sigma(Z)$. 
If~$V=\{\seq v1n\}$, we write~$\theta$ explicitly as~$[v_1 \leftarrow 
\theta(v_1),\ldots,v_n\leftarrow\theta(v_n)]$, or simply as~$[\theta(x_1),
\ldots,\theta(x_n)]$ if~$V=X_n$.

A \emph{(commutative) semiring}~\cite{gol99,hebwei98} is a tuple~$(\SR, \mathord+, 
\,\cdot\:, 0, 1)$ that satisfies the following conditions: $(\SR, \mathord+, 0)$~and $(\SR,\, \cdot\:, 1)$ 
are commutative monoids, $\cdot\:$~distributes over~$+$, and~$0 \cdot s = 0$ for all~$s\in 
\SR$.  Examples include~$\N = \bigl(\N , \mathord+,\mathord\cdot\:, 
0, 1\bigr)$, $\Z = \bigl(\Z , \mathord+,\mathord\cdot\:,
0, 1\bigr)$, $\mathbb \Q =\bigl(\Q, \mathord+,\mathord\cdot\:,
0, 1 \bigr)$, the Boolean semiring~$\B = \bigl(\{ 0,1 \}, \mathord\vee, 
\mathord\wedge,0, 1 \bigr)$ and the arctic semiring $\mathbb A = \bigl(\N \cup
\{-\infty\}, \mathord{\max}, \mathord+, -\infty, 0 \bigr)$.
When there is no risk of confusion, we refer to a semiring~$(\SR, \mathord+, \,\mathord\cdot\:, 0, 1)$ 
simply by its carrier set~$\SR$. A semiring is a~\emph{field} if it is (i) a \emph{ring}, i.e.\ there exists 
$-1\in S$ such that~$1+(-1)=0$, and (ii) a~\emph{semifield}, i.e.\ for every $a\in \SR\setminus
\{0\}$ there exists a multiplicative inverse $a^{-1}$ such that $a\cdot a^{-1}\in\SR$.
Let~$\mathbb{F}$ be a 
field, then~$\SR$ is a~\emph{subsemiring} of~$\mathbb{F}$ 
if~$\SR\subseteq\mathbb{F}$ and the operations of~$\SR$ are 
embeddable in~$\mathbb{F}$, i.e., $(+_\mathbb{F})|_\SR=+_\SR$, 
$(\,\cdot\:_\mathbb{F})|_\SR=\cdot\:_\SR$, $0_\SR=0_\mathbb{F}$ 
and~$1_\SR=1_\mathbb{F}$. The semirings~$\N$ and~$\Z$ 
are subsemirings of~$\mathbb Q$, but not~$\mathbb B$.

Let~$\Sigma$ be a ranked alphabet and~$Z$ a set. Any mapping~$\varphi \colon T_\Sigma 
(Z)\to \SR$ is called a~\emph{tree series} 
over~$\SR$, and 
its~\emph{support} is the set~$\supp(\varphi) = \{t \in T_\Sigma(Z) \mid \varphi(t) \neq 0\}$. 

Given ranked alphabets $\Sigma$~and~$\Delta$, let~$h' \colon
\Sigma \to T_\Delta(X)$ be a mapping such that for all~$k \in \N$ and~$\sigma \in \Sigma_k$, 
we have $h'(\sigma) \in T_\Delta(X_k)$.  We extend~$h'$ to a mapping
$h \colon T_\Sigma \to T_\Delta$ by 
$h(\alpha) = h'(\alpha) \in T_\Delta(X_0) = T_\Delta$ for all~$\alpha \in \Sigma_0$,
and by $h(\sigma(\seq s1k)) = h'(\sigma)[x_1 \gets h(s_1), \dotsc, x_k
\gets h(s_k)]$ for all~$k \in \N$, $\sigma \in \Sigma_k$, and~$\seq
s1k \in T_\Sigma$.  The mapping~$h$ is called the \emph{tree
  homomorphism induced by~$h'$}, and we identify~$h'$ and its induced 
tree homomorphism~$h$.  We call~$h$ 
\begin{itemize}
\item \emph{nonerasing} if~$h(\sigma) \notin X$ for all~$\sigma\in\Sigma$, 
\item \emph{nondeleting} if~$\sigma \in \Sigma_k$ implies~$\var(h'(\sigma)) = X_k$ for all~$k\in\N$, 
\item \emph{input-finitary} if the preimage~$h^{-1}(t)$ is finite 
for every~$t \in T_\Delta$, and
\item \emph{tetris-free} if it is nondeleting, nonerasing and for $s,s'\in T_\Sigma$, $\;h(s)=h(s')$ 
implies (i) $\pos(s)=\pos(s')$ and (ii) $h\big(s(p)\big)=h\big(s'(p)\big)$ for all~$p\in\pos(s)$.
\end{itemize} 

In other words, a nondeleting and nonerasing $h\colon T_\Sigma\to T_\Delta$ is tetris-free if we 
cannot combine the building blocks~$h(\sigma),\; \sigma\in\Sigma$ in different ways to build the same 
tree. Thus if we list all possible trees that can be generated from these building blocks, no tree will occur 
twice. This condition was introduced in~\cite{nas23} and generalizes injectivity: Intuitively, if a tree 
homomorphism~$h$ is tetris-free, then any non-injective behaviour of~$h$ is located entirely at 
the symbol level. 

\begin{example}\label{ex:tetris}
Let~$\Sigma=\{\alpha^{(0)},\beta^{(0)},\psi^{(2)}\}$ and~$\Delta=\{a^{(0)},f^{(3)}\}$. Consider the tree 
homomorphism~$h\colon T_\Sigma\to T_\Delta$ that is induced by the mapping~$h(\alpha)=h(\beta)=a$ 
and~$h(\psi)=f(x_2,x_1,x_1)$. While~$h$ is not injective, it is tetris-free.
However, the tree homomorphism~$h'\colon T_\Sigma\to T_\Delta$ induced by~$h'(\alpha)=a$,
$h'(\beta)=f(a,a,a)$ and~$h'(\psi)=f(x_2,x_1,a)$ is not:~$\psi(\alpha,\alpha)$ 
and~$\beta$ violate the tetris-free condition.
\end{example}

If~$h \colon T_\Sigma \to T_\Delta$ is nonerasing and 
nondeleting, then for every~$s\in h^{-1}(t)$, we have~$\abs s \leq \abs t$. In particular, $h$~is then~input-finitary. Let~$A \colon \Tsigma \to \SR$ be a tree series. Its~\emph{homomorphic image 
under~$h$} is the tree series~$h(A) \colon T_\Delta \to \SR$ defined for every~$t \in
T_\Delta$ by $h(A)(t) = \sum_{s \in h^{-1}(t)} A(s).$ This relies on~$h$ to be input-finitary, otherwise the 
defining sum is not finite, so~$h(A)(t)$ might not be well-defined. For this reason, we only consider nondeleting and nonerasing tree homomorphisms.

Recently it was shown~\cite{dlt22,WTAc-journal} that such homomorphic images of regular tree 
languages can be represented efficiently using~\emph{weighted tree automata with 
hom-constraints} (WTAh) which were defined in~\cite{dlt22}, and first introduced for the Boolean case
in~\cite{godoy2013hom}. All following concepts are illustrated in Example~\ref{ex:first ex} below.%
\begin{definition}[cf.~\protect{\cite[Definition~1]{WTAc-journal}}]
Let~$\SR$ be a commutative semiring. A \emph{weighted tree automaton 
over~$\SR$ with hom-constraints (WTAh)} is a tuple of the form $\cA=\big(Q,\Sigma,F,R,\wt\big)$ 
where~$Q$ is a finite set of states, $\Sigma$~is a ranked alphabet, $F\subseteq Q$ is the set 
of final states, $R$ is a finite set of rules of the form~$(\ell,q,E)$ such that $\ell\in 
T_\Sigma(Q)\setminus Q$, $q\in Q$ and~$E$ is an equivalence relation on~$\pos_{Q}(\ell)$, 
and~$\wt\colon R\to \SR$ assigns a weight to each rule.
\end{definition}

Rules of WTAh are typically depicted as~$r=\ell \stackrel{E}\longrightarrow_{\wt(r)} q$. The 
components of such a rule are the~\emph{left-hand side}~$\ell$, the~\emph{target 
state}~$q$, the set~$E$ of~\emph{constraints} and the~\emph{weight}~$\wt(r)$.
A constraint~$(p,p')\in E$ is listed as~``$\:p=p'\;$'', and if~$p$ is different from $p'$, then~$p$ and $p'$ 
are called~\emph{constrained positions}. The equivalence class of~$p$ in~$E$ is 
denoted~$[p]_{\equiv_E}$. We generally omit the trivial constraints~$(p,p)\in E$.

The WTAh is a~\emph{weighted tree grammar} (WTG) if~$E=\emptyset$ (strictly speaking, $E$ is 
the identity relation) for every rule~$\ell\stackrel{E}\longrightarrow q$, and a WTA in the classical 
sense~\cite{tataok} if additionally $\pos_\Sigma(\ell)=\{\varepsilon\}$. WTG and WTA are equally expressive, 
as WTG can be translated straightforwardly into WTA using additional states. 

We are particularly interested in a specific subclass of WTAh, namely 
the~\emph{eq-restricted} WTAh~\cite{WTAc-journal}. In such a device, there is a 
designated~\emph{sink-state} whose sole purpose is to neutrally process copies of identical subtrees. 
More precisely, whenever subtrees are mutually constrained, there is one leading copy among 
them that can be processed as usual with arbitrary states and weights, while every other copy is handled 
exclusively by the weight-neutral sink-state.%
\begin{definition}\label{def:eq-rest}
A WTAh~$\big(Q,\Sigma,F,R,\wt\big)$ is~\emph{eq-restricted} if it has a so-called~\emph{sink 
state}~$\bot\in Q\setminus F$ such that (i) $\sigma(\bot,\ldots,\bot)\to_1 \bot$ belongs to~$R$ for 
all~$\sigma\in\Sigma$, and no other rules target~$\bot$, and (ii) for every rule~$\ell\stackrel{E}
\longrightarrow q$ with~$q\neq\bot$, if $\pos_{Q}(\ell)=\{\seq p1n\}$ and~$q_i=\ell(p_i)$ for~$i\in[n]$, 
the following conditions hold:
\begin{enumerate}
\item For each~$i\in[n]$, 
the set~$\{q_j\mid p_j\in [p_i]_{\equiv_E}\}\setminus \{\bot\}$ is a singleton.
\item There exists exactly one~$ p_j\in [p_i]_{\equiv_E}$ such that~$q_j\neq\bot$.
\end{enumerate}
\end{definition} 
In other words, among each $E$"~equivalence class 
there is only one occurrence of a state different from~$\bot$, and every other 
$E$"~related position is labelled by~$\bot$. Moreover, $\bot$ processes every possible tree with 
weight~$1$. We denote the state sets of WTAh by~$Q\dot{\cup}\{\bot\}$ instead of~$Q\ni \bot$ to 
point out the sink-state. 

Next, let us recall the semantics of WTAh from~\protect{\cite[Definitions~2~and~3]{WTAc-journal}}. 
\begin{definition}
Let~$\cA=\big(Q,\Sigma,F,R,\wt\big)$ be a WTAh. A~\emph{run of~$\cA$} is a tree over the ranked 
alphabet~$\Sigma\cup R$ where the rank of a rule is~$\rk(\ell\stackrel{E}\longrightarrow q)=\rk\big(
\ell(\varepsilon)\big)$, 
and it is defined inductively. Consider~$\seq t1n\in 
T_\Sigma$, $\seq q1n\in Q$ and suppose that~$\varrho_i$ is a run of~$\cA$ for~$t_i$ 
to~$q_i$ with weight~$\wt(\varrho_i)=a_i$ for each~$i\in[n]$. 
Assume there exists~$\ell\stackrel{E}\longrightarrow_a q$ in~$R$ such 
that~$\ell=\sigma(\seq {\ell}1m)$, $\pos_{Q}(\ell)=\{\seq p1n\}$ with~$\ell(p_i)=q_i$, and that~$t_i=t_j$ for 
all~$(p_i,p_j)\in E$. Let~$t=\ell[t_1]_{p_1}
\dotsm[t_n]_{p_n}$, then $\varrho=\big(\ell\stackrel{E}\longrightarrow_a q\big)
(\seq {\ell}1m)[\varrho_1]_{p_1}\dotsm[\varrho_n]_{p_n}$ is a run of~$\cA$ for~$t$  to~$q$.
Its \emph{weight~$\wt(\varrho)$} is computed as~$a\cdot 
\prod_{i\in [n]} a_i$. If~$\wt(\varrho)\neq 0$, then~$\varrho$ is~\emph{valid}, and if in addition,
$q\in F$ for its \emph{target state}~$q$, then $\varrho$ is~\emph{accepting}. 
%
The value~$\wt^q(t)$ is the sum of all 
weights~$\wt(\varrho)$ of runs of~$\cA$ for~$t$ to~$q$. Finally, the
tree series $\sem{\cA}\colon \;T_\Sigma\to\SR$ recognized by~$\cA$ is defined simply by
 $\sem{\cA}\colon  t\mapsto\sum_{q\in F} \wt^q(t)$.
\end{definition}
Since the weights of rules are multiplied, we assume wlog~$\wt(r)\neq 0$ for all~$r\in R$.%
\begin{example}\label{ex:first ex}
Let~$\Delta=\{a^{(0)},g^{(2)},f^{(3)}\}$ and~$\cA'=\big(Q\dot\cup\{\bot\},\Delta,F',R',\wt\big)$ be 
the WTGh over~$\Z$ with~$Q=\{q,q_f\}$,  $F'=\{q_f\}$ and the set of rules and weights
\begin{align*}
R'\;=\; \big\{\; &a\to_1 q\,,\quad g(a,q)\to_2 q\,,\quad \;f(q,q,\bot)\stackrel{2=3}\longrightarrow_1 
q_f\,,\\ &a\to_1 \bot\,, \:\:\: g(\bot,\bot)\to_1 \bot\,, \,\; f(\bot,\bot,\bot)\to_1\bot \;\}\,.
\end{align*}
The constrained positions~$2,3$ in the third rule satisfy (ii) from Definition~\ref{def:eq-rest}, and the 
$\bot$"~rules are as required in (i), so~$\cA'$ is eq-restricted.  
If we replace the third rule with~$f(q,q,q)\stackrel{2=3}
\longrightarrow_1 q_f$, then the resulting WTAh is not eq-restricted any more. 
Let~$t=f\big(a,\,g(a,a),\,g(a,a)\big)\in T_\Delta$.~$\cA'$ has a unique accepting run~$\varrho$ for~$t$:
\begin{center}\scalebox{0.94}{	 
		\begin{tikzpicture}
			\node at (0,1) (a) {$t\colon$};
	\node at (1.7, 0)  (b) {
				\begin{forest}
					for tree={%
						l sep=0.1cm,
						s sep=0.3cm,
						minimum height=0.000008cm,
						minimum width=0.000015cm,
					}
	[$f$[$a$][$g$[$a$][$a$]][$g$[$a$][$a$]]]
				\end{forest}
			};		
	\node at (4.8,1) (a) {$\varrho\colon$};
	\node at (8, 0)  (b) {
				\begin{forest}
					for tree={%
						l sep=0.1cm,
						s sep=0.4cm,
						minimum height=0.000008cm,
						minimum width=0.000015cm,
					}
	[$f(q\mathpunct{,} q\mathpunct{,} \bot)\stackrel{2=3}\longrightarrow_1 q_f$
	[$a\to_1 q$] [$g(a\mathpunct{,}q)\to_2 q$[$a$][$a\to_1 q$] ]
		 [$g(\bot\mathpunct{,}\bot)\to_1 \bot$[$a\to_1 \bot$][$a\to_1 \bot$] ]]
				\end{forest}
			};				
	\end{tikzpicture} } 
\end{center} 
We have~$\wt(\varrho)=2$ despite~$\vert\pos_g(t)\vert=2$ because due to the eq-restriction, the duplicated 
subtree~$t|_3$ is processed exclusively in the state~$\bot$ with weight~$1$.
\end{example}
If a tree series is recognized by a WTA, it is called~\emph{regular}, and if it is recognized by an 
eq-restricted WTAh, then it is called~\emph{hom-regular}. This choice of name hints at the fact that 
eq-restricted WTAh are tailored to represent homomorphic images of regular tree 
series. The following example demonstrates this property.%
\begin{example}\label{ex:hom image}
Consider~$\Sigma=\{\alpha^{(0)},\gamma^{(1)},\psi^{(2)}\}$ and let~$\cA=\big(\{q,q_f\},
\Sigma,\{q_f\},R,\wt\big)$ be the WTA over~$\Z$ with the following set of rules: 
$$R=\big\{ \alpha\to_1 q,\; \gamma(q)\to_2 q,\; 
\psi(q,q) \to_1 q_f \big\}\;.$$ It
is $\supp(\sem\cA)=\big\{\psi\big(\gamma^n(\alpha),\gamma^m(\alpha)\big)\mid n,m\in\N\big\}
=\big\{s\in T_\Sigma\mid \pos_\psi(s)=\{\varepsilon\}\big\}$ and $\sem\cA\colon\psi
\big(\gamma^n(\alpha),\gamma^m(\alpha)\big) \mapsto 2^{n+m}= 2^{\abs{\pos_\gamma(s)}}$. 
Let~$\Delta=\{a^{(0)},g^{(2)},f^{(3)}\}$ and let~$h\colon T_\Sigma \to T_\Delta$ be the tetris-free tree
homomorphism induced by $$h(\alpha)=a\,,\quad h(\gamma)=g(a,x_1) \; \text{ and }\;h(\psi)=f
\big(x_2,x_1,x_1\big)\;.$$
Then the eq-restricted WTAh~$\cA'$ from Example~\ref{ex:first ex} recognizes~$h(\sem\cA)$ 
defined by $\supp\big(h(\sem\cA)\big)=\big\{t\in T_\Delta\mid \pos_f(t)=\{\varepsilon\}\big\}$ 
and $h(\sem\cA)\colon t\mapsto 2^{\abs{\pos_g(t)\setminus \pos_g(t|_3)}}$. The rules in~$R'$ are 
obtained from the rules in~$R$ by applying~$h$ to their left-hand sides, and the duplicated subtree 
at position~$3$ below~$f$ targets~$\bot$ instead of~$q$ to avoid distorting 
the weight with an additional factor~$2^n$. 
\end{example}

Formally, the following statement was shown in~\cite{dlt22}. We have included the proof for better readability.
\begin{lemma}\textbf{\emph{(see~\protect{\cite[Theorem~19]{dlt22}})}}
\label{lm:WTAh for hom image}
Let~$\cA=\big(Q,\Sigma,F,R,\wt\big)$ be a 
WTA over a commutative semiring~$\SR$ and~$h\colon T_\Sigma\to T_\Delta$ a 
nondeleting and nonerasing tree homomorphism. There is an eq-restricted WTAh~$\cA'$ that 
recognizes~$h(\sem\cA)$.
\end{lemma}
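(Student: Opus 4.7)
The plan is to build $\cA'$ by translating each rule of $\cA$ through $h'$ while adding a fresh sink state $\bot$ to carry the duplicated subtrees produced by $h$. Concretely, set $\cA'=\big(Q\dot\cup\{\bot\},\Delta,F,R',\wt'\big)$. For every rule $r=\sigma(q_1,\ldots,q_k)\to_a q$ of $\cA$, let $\ell_\sigma=h'(\sigma)\in T_\Delta(X_k)$; since $h$ is nondeleting, each $\pos_{x_i}(\ell_\sigma)$ is nonempty, so I pick a leading position $p_{i,1}$ (say, lexicographically smallest) and list the remaining occurrences as $p_{i,2},\ldots,p_{i,n_i}$. Obtain $\ell'$ from $\ell_\sigma$ by replacing the label at $p_{i,1}$ with $q_i$ and the label at each $p_{i,j}$ for $j\geq 2$ with $\bot$, and define $E$ to have non-trivial classes $\{p_{i,1},\ldots,p_{i,n_i}\}$ for those $i$ with $n_i\geq 2$. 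Add $\ell'\stackrel{E}\longrightarrow_a q$ to $R'$ with weight $a$. Finally, for every $\delta\in\Delta$, add the weight-$1$ sink rule $\delta(\bot,\ldots,\bot)\to_1\bot$ to $R'$.

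By construction, $\cA'$ is eq-restricted: the sink rules are exactly those demanded in Definition~\ref{def:eq-rest}(i), and within every equivalence class of a translated rule, exactly one position carries a non-$\bot$ state while the rest carry $\bot$, which is Definition~\ref{def:eq-rest}(ii). To prove $\sem{\cA'}=h(\sem{\cA})$, I would establish, for each $t\in T_\Delta$, a weight-preserving bijection between accepting runs of $\cA$ for trees $s\in h^{-1}(t)$ and accepting runs of $\cA'$ for $t$. The forward direction is inductive: given $\varrho=(r)(\varrho_1,\ldots,\varrho_k)$ on $s=\sigma(s_1,\ldots,s_k)$, build $\varrho'$ by applying the translated rule at the root, placing the inductively translated subruns $\varrho'_i$ at each $p_{i,1}$, and placing the unique weight-$1$ $\bot$-run for $h(s_i)$ at each non-leading position $p_{i,j}$. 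The constraints in $E$ are satisfied because the subruns at $p_{i,1}$ and at each $p_{i,j}$ produce the same tree $h(s_i)$; the underlying tree is then $h'(\sigma)[h(s_1),\ldots,h(s_k)]=h(s)$; and since every sink-rule contributes factor~$1$, the weight equals $a\cdot\prod_i\wt(\varrho'_i)=\wt(\varrho)$ inductively.

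For the reverse direction, let $\varrho'$ be an accepting run of $\cA'$ for $t$. Its root rule targets a state in $F$, hence not $\bot$, so it is a translated rule arising from some $r=\sigma(q_1,\ldots,q_k)\to_a q$. The subruns at the leading positions $p_{i,1}$ target states in $Q$ and, by the induction hypothesis, recover trees $s_i\in h^{-1}(t|_{p_{i,1}})$ together with $\cA$-runs of matching weight; the subruns at the non-leading positions are forced to be the unique weight-$1$ $\bot$-run, and the constraints in $E$ force the corresponding subtrees to coincide with $h(s_i)$. Setting $s=\sigma(s_1,\ldots,s_k)$ yields $h(s)=t$ and a run of $\cA$ of the same weight. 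The main technical obstacle lies exactly here: one must check that the non-$\bot$ skeleton of $\varrho'$ uniquely recovers both the preimage $s\in h^{-1}(t)$ and the original $\cA$-run, so that summing over $s$ gives $\sem{\cA'}(t)=\sum_{s\in h^{-1}(t)}\sem{\cA}(s)=h(\sem{\cA})(t)$. Once the bijection is spelled out carefully, this follows from the eq-restriction together with the triviality of $\bot$-runs (exactly one per input tree, with weight $1$).
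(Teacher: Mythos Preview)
Your construction of~$\cA'$ is essentially the paper's, but your correctness argument has a real gap. You claim a weight-preserving \emph{bijection} between accepting runs of~$\cA$ on preimages~$s\in h^{-1}(t)$ and accepting runs of~$\cA'$ on~$t$, and you assert that the non-$\bot$ skeleton of an~$\cA'$-run ``uniquely recovers both the preimage~$s$ and the original $\cA$-run''. This is false in general. If two distinct rules~$\sigma(q_1,\ldots,q_k)\to_a q$ and~$\tau(q_1,\ldots,q_k)\to_b q$ of~$\cA$ satisfy~$h'(\sigma)=h'(\tau)$, then your translation sends both to the \emph{same} triple~$(\ell',E,q)$; since~$R'$ is a set and~$\wt'$ a function on it, one of the weights is lost, and your ``bijection'' collapses two distinct $\cA$-runs into one $\cA'$-run. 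Even if the weights happen to differ so that two rules survive, a further coincidence~$a=b$ would again collapse them. Nothing about eq-restriction or the triviality of $\bot$-runs prevents this; the problem occurs entirely in the non-$\bot$ part.

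The paper sidesteps exactly this issue with a two-stage construction: it first builds an intermediate WTAh~$\cA''$ over the enlarged alphabet~$\Delta\cup(\Delta\times R)$, tagging the root symbol of each translated left-hand side with the original rule~$r$. This makes the rule translation injective by fiat, so the bijection between runs genuinely holds for~$\cA''$. Then it applies a deterministic relabeling that forgets the tag and \emph{sums} the weights of rules that become identical, invoking closure of hom-regular series under relabelings. Your direct construction can be salvaged by explicitly summing the weights of all $\cA$-rules that map to the same~$(\ell',E,q)$, but then the correspondence on runs is no longer a bijection: it is a surjection whose fibres have weights summing to the weight of the image run, and this is what you would actually need to verify.
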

\begin{proof}
An eq-restricted WTAh~$\cA'$ for~$h(\sem\cA)$ is constructed in two stages.  
  First, we define~$\cA'' = \bigl(Q \dot{\cup} \{\bot\}, \Delta \cup \Delta
  \times R, F'', R'', \wt'' \bigr)$ such that for every~$r = \sigma(\seq q1k) \to_{\wt(r)} q$ 
  in~$R$ and $h(\sigma) = u =  \delta(\seq u1n)$, we include
  \begin{align*} r'' \;&=\; \Bigl( \langle \delta,r\rangle(\seq u1n) \llbracket \seq
    q1k \rrbracket \stackrel{E}\longrightarrow_{\wt''(r'')} q \Bigr) \;\in\;
    R'' \qquad \intertext{with} \qquad E \;&=\; \bigcup_{i \in [k]}
    \pos_{x_i}(u)^2\end{align*}
  where the substitution~$\langle \delta,r\rangle(\seq u1n)\llbracket
  \seq q1k \rrbracket$ replaces for every~$i \in [k]$ only the
  $\leq_{\text{lex}}$"~minimal occurrence of~$x_i$ in~$\langle \delta,r\rangle(\seq u1n)$
  by~$q_i$, and every other occurrence by~$\bot$. 
  For this rule we set $\wt''(r'')= \wt(r)$.
   Additionally, we let~$r''_\delta = \delta(\bot, \dotsc, \bot) \to \bot \in R''$ with~$\wt''(r''_\delta) 
  = 1$ for every~$k \in \nat$ and $\delta \in \Delta_k$. 
  No other productions are in~$R''$.  Finally, we let~$F''  = F$.
  
  We can now delete the annotation: We use a deterministic relabeling to remove
  the second components of labels of~$\Delta \times R$, adding up the weights of now identical 
  rules. Since hom-regular languages are closed under relabelings \cite[Theorem~4]{WTAc-journal}, 
  we obtain an  eq-restricted WTAh~$\cA'=\big(Q\dot{\cup}\{\bot\},\Delta,  F',R',\wt'\big)$ 
  recognizing~$h(\sem\cA)$. \qed
   \end{proof}

As illustrated in Examples~\ref{ex:first ex} and~\ref{ex:hom image}, the WTAh $\cA'$ for the homomorphic 
image of a WTA~$\cA$ replaces each symbol $\sigma$ in a rule of $\cA$ by $h(\sigma)$, and preserves the 
original state behaviour, only adding~$\bot$ along the duplicated subtrees. Thus, we can define a mapping 
that traces the runs of $\cA$ to the runs of $\cA'$.

\begin{definition}~$\!$~\emph{\textbf{\protect{\cite[Definition~9]{nas23}}}}
\label{def:hR on runs}
   Let~$\cA=\big(Q,\Sigma,F,R,\wt\big)$ be a WTA over a commutative semiring~$\SR$ 
   and~$h\colon T_\Sigma\to T_\Delta$ a nondeleting and nonerasing tree homomorphism. Let~$\cA'$ 
   be the WTAh for~$h(\sem\cA)$ provided by Lemma~\ref{lm:WTAh for hom image}. Consider a rule
   $r= \sigma(\seq q1k) \to q$ of~$\cA$ and let~$h(\sigma) \;= \;\delta(\seq u1n)$, then we set $$h^R(r)=
   \delta(\seq u1n) \llbracket \seq q1k \rrbracket \stackrel{E}\longrightarrow q,$$
   where the substitution~$\llbracket
  \seq q1k \rrbracket$ replaces for every~$i \in [k]$ only the
  $\leq_{\text{lex}}$"~minimal occurrence of~$x_i$ in~$\delta(\seq u1n)$
  by~$q_i$, and every other occurrence by~$\bot$. The constraint set is defined as $E= \bigcup_{i 
  \in [k]}\big[\pos_{x_i}\big(\delta(\seq u1n)\big)\big]^2$.
    
    The assignment~$h^R$ extends naturally to the runs of~$\cA$:  For a run of the 
    form~$\vartheta=r=(\alpha\to q)$ with $\alpha\in\Sigma^0$, we set~$h^R(\vartheta)=h^R(r)$. 
    For~$\vartheta=r(\seq {\vartheta}1k)$ with~$r=\sigma(\seq q1k)\to q$ 
    and~$h(\sigma) = \delta(\seq u1n)$ we set $$h^R(\vartheta)=\big(h^R(r)\big)(\seq u1n)
	\llbracket h^R(\vartheta_1),\ldots, h^R(\vartheta_k)\rrbracket\;;$$
	
	here, the substitution~$\llbracket h^R(\vartheta_1),\ldots, h^R(\vartheta_k)\rrbracket$ replaces 
	for every~$i \in [k]$ only the $\leq_{\text{lex}}$"~minimal occurrence of~$x_i$ in~$\big(h^R(r)
	\big)(\seq u1n)$ by~$h^R(\vartheta_i)$, and all other occurrences by the respective unique run 
	to~$\bot$ for the tree processed by~$\vartheta_i$. 
\end{definition}

Let us see how $h^R$ acts on our example from above.
\begin{example}
Recall the WTA~$\cA$ and WTAh~$\cA'$ from Examples~\ref{ex:first ex} and~\ref{ex:hom image}. 
We have 
$$h^R\;\colon \quad \psi(q,q)\to q_f \quad\mapsto \quad f\big(q,q,\bot)\stackrel{2=3}
\longrightarrow q_f\; ,$$ and for the unique run of~$\cA$ for the tree~$\psi\big(\gamma(\alpha),
\alpha\big)$, the image under $h^R$ is%
	 \begin{center} \scalebox{.94}{
		\begin{tikzpicture}
	\node at (0, 0)  (a) {
				\begin{forest}
					for tree={%
						l sep=0.1cm,
						s sep=0.4cm,
						minimum height=0.000008cm,
						minimum width=0.000015cm,
					}
	[$\psi(q\mathpunct{,}q)\to_1 q_f$[$\gamma(q)\to_2 q$[$\alpha\to_1 q$] ] [$\alpha\to_1 q$]]
				\end{forest}
			};
	\node at (2.5,0) (label) {$\mapsto$};
	\node at (7, .05)  (b) {
				\begin{forest}
					for tree={%
						l sep=0.1cm,
						s sep=0.4cm,
						minimum height=0.000008cm,
						minimum width=0.000015cm,
					}
	[$f(q\mathpunct{,} q\mathpunct{,} \bot)\stackrel{2=3}\longrightarrow_1 q_f$
	[$a\to_1 q$] [$g(a\mathpunct{,}q)\to_2 q$[$a$][$a\to_1 q$] ]
		 [$g(\bot\mathpunct{,}\bot)\to_1 \bot$[$a\to_1 \bot$][$a\to_1 \bot$] ]]
				\end{forest}
			};			
	\node at (10.83,-1.068) (dot) {.};
	\end{tikzpicture} }
\end{center}
\end{example}

The following statement is a direct consequence of the proof of Lemma~\ref{lm:WTAh for hom image}.
\begin{lemma}\label{lm:hR}
The mapping~$h^R$ from Definition~\ref{def:hR on runs} is well-defined on~$R$, although 
not necessarily injective. Its image is~$h^R(R)=\{r'\in R'\mid r'\text{ targets some }q\neq\bot\}$. If
$\vartheta$ is a run of~$\cA$ for~$s\in T_\Sigma$, then~$h^R(\vartheta)$ is a run of~$\cA'$ for~$h(s)$; 
conversely, for every run~$\varrho$ of~$\cA'$ for some~$t\in T_\Delta$ to some~$q\neq \bot$,
there exists~$s\in h^{-1}(t)$ and a run~$\vartheta$ of~$\cA$ for~$s$ to~$q$ such that~$h^R(\vartheta)
=\varrho$, but~$\wt(\vartheta)$ and~$\wt'(\varrho)$ may differ.
\end{lemma}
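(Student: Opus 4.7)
The statement decomposes into four assertions, which I would address in order. For well-definedness, I would observe that the formula of Definition~\ref{def:hR on runs} specifies a triple (left-hand side, target state, constraint set) that depends only on $r$ and on $h(\sigma)$; comparing with the construction in the proof of Lemma~\ref{lm:WTAh for hom image}, this triple coincides with the annotation-free form of the corresponding intermediate rule $r''\in R''$, and hence lies in $R'$. Non-injectivity arises precisely when distinct rules of $R$ share their $\Sigma$-label, their children states, and their target (differing only in weight), or more generally when the deannotation in Lemma~\ref{lm:WTAh for hom image} merges two rules; in that case the relabeling sums the weights. For the image characterization I would examine $R''$: the rules built from elements of $R$ all target some $q\in Q$, whereas the auxiliary rules $\delta(\bot,\ldots,\bot)\to_1\bot$ are the only ones targeting $\bot$. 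Since the relabeling preserves targets, $h^R(R)$ coincides with the rules of $R'$ that do not target $\bot$.

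The forward direction for runs is a straightforward structural induction on $\vartheta$. The base case $\vartheta=(\alpha\to q)$ with $\alpha\in\Sigma_0$ is immediate, because $h^R(\vartheta)$ is by definition a run of~$\cA'$ for $h(\alpha)\in T_\Delta$ to~$q$. For the inductive step $\vartheta=r(\vartheta_1,\ldots,\vartheta_k)$ with $r=\sigma(q_1,\ldots,q_k)\to q$ and $h(\sigma)=\delta(u_1,\ldots,u_n)$, the IH supplies subruns $h^R(\vartheta_i)$ of $\cA'$ for $h(s_i)$ to $q_i$. Inserting these at the $\leq_{\text{lex}}$"~minimal $x_i$-occurrence of $h(\sigma)$ and filling every other $x_i$-occurrence with the unique $\bot$-run for $h(s_i)$ assembles a run of $\cA'$ whose underlying tree is $\delta(u_1,\ldots,u_n)[h(s_1),\ldots,h(s_k)]=h(s)$, and whose constraint set $E$ is satisfied because all positions of $x_i$ carry the common subtree $h(s_i)$.

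The converse also proceeds by induction on $\varrho$, and this is where the main subtlety lies. The top rule $r'$ of $\varrho$ targets some $q\neq\bot$, so by the image characterization it lifts to some $r=\sigma(q_1,\ldots,q_k)\to q\in R$ with $h^R(r)=r'$. Writing $h(\sigma)=\delta(u_1,\ldots,u_n)$, the constraint set $E$ of $h^R(r)$ forces, for each $i\in[k]$, all positions of $x_i$ in $h(\sigma)$ to carry an identical subtree $t_i\in T_\Delta$: the $\leq_{\text{lex}}$"~minimal such position supports a subrun $\varrho_i$ targeting $q_i\neq\bot$, while every other such position is processed by the unique $\bot$-run for $t_i$. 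The IH applied to $\varrho_i$ yields $s_i\in h^{-1}(t_i)$ and a run~$\vartheta_i$ of $\cA$ for~$s_i$ to~$q_i$ with $h^R(\vartheta_i)=\varrho_i$; setting $s=\sigma(s_1,\ldots,s_k)$ and $\vartheta=r(\vartheta_1,\ldots,\vartheta_k)$ then gives $h(s)=t$ and $h^R(\vartheta)=\varrho$. The weight discrepancy is explained by the observed non-injectivity: several preimage rules of~$r'$ may have their weights summed in $\wt'(r')$ via the relabeling, while~$\vartheta$ selects only one preimage. The chief obstacle is the bookkeeping around the eq-restriction --- verifying that the extracted $t_i$ agree across all $E$-equivalent positions and that the reconstructed~$s$ is consistent, which relies on the uniqueness of $\bot$-runs guaranteed by condition~(i) of Definition~\ref{def:eq-rest}.
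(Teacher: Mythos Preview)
Your proposal is correct and matches the paper's approach: the paper offers no explicit proof, merely noting that the lemma is a direct consequence of the construction in Lemma~\ref{lm:WTAh for hom image}, and your structural inductions are exactly the natural way to unfold that construction. One minor slip: since $\wt$ is a function on~$R$, two distinct rules cannot ``differ only in weight'', so non-injectivity of~$h^R$ comes solely from the deannotation merging rules that arise from distinct $\sigma,\tau\in\Sigma$ with $h(\sigma)=h(\tau)$ --- your ``or more generally'' clause is the actual mechanism.
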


\section{A Pumping Lemma over Fields}\label{sec:pumping}

The weighted HOM-problem takes
a WTA~$\cA$ and a nondeleting, nonerasing tree homomorphism $h$ as input, and asks whether~$h(
\sem\cA)$ is again regular. As mentioned earlier, the $\N$"~variant of this problem was shown 
to be decidable in~\cite{integer-hom}. The proof presented there makes two assumptions on the 
semiring used for the weight calculations: First, it must be a subsemiring of a field, and second, it must 
be zero-sum free; the only common semiring that satisfies both conditions is~$\N$. Remarkably, the 
strong condition of zero-sum freeness is only used to prove a pumping lemma for~$h(\sem{\cA})$.
In this section, we derive an alternative pumping lemma over fields, provided that $h$ is 
tetris-free. This way, we bypass the zero-sum freeness assumption, which allows us to lift the proof
of~\cite{integer-hom} to the HOM-problem over fields, for tetris-free tree homomorphisms. 

We begin by establishing a notation for the tree fragment read by a rule.
\begin{definition}
Let~$\cA'=\big(Q\dot{\cup}\{\bot\},\Delta,F,R,\wt\big)$ be an eq-restricted WTAh and let~$r=\ell
\stackrel{E}\longrightarrow q$ be a rule of~$\cA'$ with some~$q\neq\bot$. Let~$\pos_{Q\setminus
\{\bot\}}(\ell)=\{\seq p1k\}$. The~\emph{$\Delta$"~part of~$r$} is the 
tree~$\widehat\ell=\ell[\bot]_{p_1}\dotsm[\bot]_{p_k} \in T_\Delta(\{\bot\})$.
\end{definition}%
The~$\Delta$"~part of a rule extracts the tree fragment from its left-hand side and 
overwrites every state label (for convenience simply with~$\bot$). Note that~$\ell$ can be easily 
recovered from~$\widehat\ell,\, E$ and the states~$\ell(p_1),\ldots,\ell(p_k)$ in the correct order.

To prove the desired pumping lemma for our WTAh, we reduce it to the pumping lemma for WTA over 
fields proved by Berstel and Reutenauer in~\cite{berreu82}. For this, we must construct a WTA 
related to the WTAh~$\cA'$ for~$h(\sem\cA)$. The naive idea to simply use the input WTA~$\cA$ falls 
short: If $h$ is not injective, there may be 
$s,s'\in \supp(\sem\cA)$ with~$h(s)=h(s')\notin 
\supp(\sem{\cA'})$ since in fields, different runs for $h(s)$ might cancel each other out, so we cannot 
lift the pumping lemma 
from~$\cA$ to $\cA'$. Instead, we fabricate a new WTA that traces the behaviour of~$\cA'$ 
but ignores duplicated subtrees in order to remain regular. We will argue the well-definedness of this 
construction using some technical lemmas below.%
\begin{definition}\label{def:hatA}
Let~$\cA'=\big(Q\dot{\cup}\{\bot\},\Delta,F,R,\wt\big)$ be the eq-restricted WTAh from
Lemma~\ref{lm:WTAh for hom image} for a WTA and a tetris-free tree homomorphism. Consider the ranked 
alphabet~$\widehat\Delta=\{\widehat\ell\mid\ell\text{ is the left-hand side of some }
r\in R\}$ with the rank function $\widehat\rk(\,\widehat\ell\,)=\abs{\pos_{Q\setminus\{\bot\}}(\ell)}$. We 
define the WTA~$\widehat\cA=\big(Q\setminus\{\bot\},\widehat\Delta,F,\widehat{R},\widehat\wt\big)$ 
such that if~$r=\ell\stackrel{E}\longrightarrow q\in R$ with~$q\neq \bot$ and~$\pos_{Q\setminus\{\bot\}}
(\ell)=\{\seq p1k\}$ ordered lexicographically with~$\ell(p_i)=q_i$ for all~$i\in [k]$, then~$\widehat
\ell(\seq q1k)\to q\in \widehat R$ with weight~$\wt(r)$. No other rules are in~$\widehat R$. 

The translation~$\cA'\mapsto\widehat\cA$ induces a mapping~$t\mapsto\widehat t$ defined inductively 
as follows: 
Consider~$t\in T_\Delta$, a run~$\varrho$ of~$\cA'$ for~$t$ with $\varrho(\varepsilon)=\ell\stackrel{E}
\longrightarrow q$ and let~$\pos_{Q\setminus\{\bot\}}(\ell)$ be the set~$\{\seq p1k\}$ in lexicographic 
order. Then~$\widehat t=\widehat{\ell}\big(\widehat{t|_{p_1}} ,\ldots, \widehat{t|_{p_2}}\big)\in 
T_{\widehat{\Delta}}$.
\end{definition}
The WTA~$\widehat\cA$ reinterprets the trees~$t\in T_\Delta$ as trees~$\widehat{t}\in 
T_{\widehat{\Delta}}$ which, instead of symbols~$\delta\in\Delta$, are now composed of the 
$\Delta$"~parts of the rules of~$\cA'$. As the WTA~$\widehat\cA$, without the instrument of constraints 
at hand, cannot ensure equality of subtrees, 
all~$\bot$"~processed copies are discarded, and $\bot$ is not a state anymore. 
\begin{example}\label{ex:hatA}
Recall the WTAh~$\cA'$ from Example~\ref{ex:first ex}. The ranked alphabet~$\widehat{\Delta}$ is the 
set~$\widehat{\Delta}=\{a^{(0)},\, [g(a,\bot)]^{(1)}, \,[f(\bot,\bot,\bot)]^{(2)}\}$, and the WTA~$\widehat
\cA$ is defined by~$\widehat\cA=\big(Q,\widehat{\Delta},F',\widehat R,\widehat\wt\big)$ with the 
following set of rules and weights: \begin{align*}
\widehat{R}\;=\;\big\{\; a \to_1 q \,,\quad [g(a,\bot)](q)\to_2 q\,,
\quad [f(\bot,\bot,\bot)](q,q)\to_1 q_f \; \big\}\;.
\end{align*} 
For~$t=f\big(a,\,g(a,a),\,g(a,a)\big)\in T_\Delta$ it is~$\widehat{t}= [f(\bot,\bot,\bot)]\big(\,a,\:[g(a,\bot)]
(a)\,\big)\in T_{\widehat{\Delta}}\,$:
	 \begin{center}
		\begin{tikzpicture}
	\node at (-4.22,0) (label) {$t\colon$};
	\node at (-1.9, 0)  (a) { 
				\begin{forest}
					for tree={%
						l sep=0.1cm,
						s sep=0.4cm,
						minimum height=0.000008cm,
						minimum width=0.000015cm,
					}
	[$f$[$a$][$g$[$a$][$a$]][$g$[$a$][$a$]]]
				\end{forest}
			};
	\node at (.7,-.05) (label) {$\mapsto$};
	\node at (2.6,.04) (label) {$\widehat t\colon$};
	\node at (4.7,1.3)  (b) {\scalebox{.65}{
				\begin{forest}
					for tree={%
						l sep=0.1cm,
						s sep=0.4cm,
						minimum height=0.000008cm,
						minimum width=0.000015cm,
					}
	[$f$[$\bot$][$\bot$][$\bot$]]
				\end{forest}
			}};			
	\node at (5.9,-.25)  (b) {\scalebox{.65}{
				\begin{forest}
					for tree={%
						l sep=0.1cm,
						s sep=0.4cm,
						minimum height=0.000008cm,
						minimum width=0.000015cm,
					}
	[$g$[$a$][$\bot$]]
				\end{forest}
			}};			
		\node at (4.7, .38)  (b) {
				\begin{forest}
					for tree={%
						l sep=0.1cm,
						s sep=0.4cm,
						minimum height=0.000008cm,
						minimum width=0.000015cm,
					}
	[\textcolor{white}{.}[\textcolor{white}{mmm}$a$\textcolor{white}{mmm}][\textcolor{white}{mmmammm}]]
				\end{forest}
			};	
	\node at (5.9, -1)  (b) {
				\begin{forest}
					for tree={%
						l sep=0.1cm,
						s sep=0.4cm,
						minimum height=0.000008cm,
						minimum width=0.000015cm,
					}
	[\textcolor{white}{.}[$a$]]
				\end{forest}
			};	
	\end{tikzpicture} 
\end{center}
\end{example}
The following two lemmas are the basis for the correctness of our translation above. Unlike in a WTA 
where trees are read symbol-by-symbol, a rule of a WTAh processes an entire tree fragment; in general, 
there may be different ways to assemble a certain tree from these $\Delta$"~parts of the rules of the 
WTAh, but by definition, tetris-free tree homomorphisms exclude this ambiguity.%
\begin{lemma}\label{lm:left-hand side runs}
Let~$\cA'=\big(Q\dot{\cup}\{\bot\},\Delta,F,R,\wt\big)$ be the eq-restricted WTAh from
Lemma~\ref{lm:WTAh for hom image} for a WTA and a tetris-free tree homomorphism. For every~$t
\in T_\Delta$, the runs of~$\cA'$ for~$t$ differ only in the states they process, but neither in the 
$\Delta$"~part of the rules they use, nor in their constraints. In particular, the set of positions related 
to any~$p''\in\pos(t)$ by the constraints of the rules used in a run coincides for all 
runs of~$\cA'$ for~$t$, i.e.\ it is uniquely determined by~$t$.
\end{lemma}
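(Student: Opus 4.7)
The plan is to proceed by structural induction on $t \in T_\Delta$, comparing two arbitrary runs $\varrho_1$ and $\varrho_2$ of $\cA'$ for $t$ and showing that at every position they employ rules with identical $\Delta$-parts and identical constraint sets. The base case $t = a \in \Delta_0$ is immediate: every rule of $\cA'$ with left-hand side $a$ has no state positions, so its $\Delta$-part is simply $a$ and its constraint set is empty.

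For the inductive step at $t = \delta(t_1, \ldots, t_m)$, the eq-restriction cleanly splits the analysis by the target state of the top rule. If both runs target $\bot$, then condition~(i) of Definition~\ref{def:eq-rest} forces the unique $\bot$-rule $\delta(\bot, \ldots, \bot) \to_1 \bot$ at the root, and the inductive hypothesis on the subtrees concludes. If both runs target some $q \neq \bot$, I invoke Lemma~\ref{lm:hR} to obtain preimages $s_i \in h^{-1}(t)$ together with runs $\vartheta_i$ of $\cA$ for $s_i$ with $h^R(\vartheta_i) = \varrho_i$ for $i \in \{1,2\}$.

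The key step is then to apply tetris-freeness to $h(s_1) = t = h(s_2)$: this yields $\pos(s_1) = \pos(s_2)$ and $h(s_1(p)) = h(s_2(p))$ for every $p \in \pos(s_1)$, and in particular $h(\sigma_1) = h(\sigma_2)$ for $\sigma_i := s_i(\varepsilon)$. By the construction in Lemma~\ref{lm:WTAh for hom image}, the top rule of $\varrho_i$ has $\Delta$-part $\widehat{h(\sigma_i)}$ and constraint set $\bigcup_{j} \pos_{x_j}(h(\sigma_i))^2$, both depending solely on $h(\sigma_i)$; hence they agree across the two runs. Since the non-$\bot$ positions of the top rule are the $\leq_{\text{lex}}$-minimal occurrences of each $x_j$ in $h(\sigma_i)$, they also match, and the inductive hypothesis applied to the strictly smaller subtrees of $t$ closes the recursion. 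The ``in particular'' claim follows directly, because the equivalence relation on $\pos(t)$ induced by the collected constraint sets is fully determined by those sets.

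The principal obstacle is the controlled invocation of tetris-freeness at the top level. A priori the two runs may correspond to genuinely different preimages $s_1 \neq s_2$ of $t$, and their top symbols $\sigma_1, \sigma_2 \in \Sigma$ may themselves differ as symbols; a weaker hypothesis such as injectivity of $h$ would not suffice to align them. Tetris-freeness instead forces the stronger position-wise agreement $h(s_1(p)) = h(s_2(p))$, and it is exactly this agreement on the level of $h$-images of symbols---rather than on the symbols themselves---that makes the $\Delta$-parts and constraints coincide, since both depend only on $h(\sigma)$ and not on $\sigma$ itself.
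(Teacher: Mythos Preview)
Your argument is correct and takes essentially the same route as the paper: pull the two runs back to~$\cA$ via Lemma~\ref{lm:hR}, apply tetris-freeness to the resulting preimages, and use that the $\Delta$-part and constraint set of each rule produced by~$h^R$ depend only on the $h$-image of the underlying input symbol. The paper merely leaves the structural recursion implicit---once tetris-freeness gives $h(s_1(p)) = h(s_2(p))$ for \emph{all} positions~$p$ simultaneously, the definition of~$h^R$ forces agreement at every level in one stroke, so your explicit induction and separate treatment of the $\bot$-case are not strictly needed.
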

\begin{proof}
Let~$\varrho$ and~$\varrho'$ be runs of~$\cA'$ for some~$t\in T_\Delta$. By Lemma~\ref{lm:hR}, 
there are two runs~$\vartheta$ and~$\vartheta'$ of~$\cA$ for some~$s$ and~$s'$, respectively, such 
that~$h(s)=h(s')=t$, and $h^R(\vartheta)=\varrho$ and $h^R(\vartheta')=\varrho'$. Since~$h$ is 
tetris-free, it is~$\pos(s)=\pos(s')$ and $h\big(s(p)\big)=h\big(s'(p)\big)$ at every~$p\in\pos(s)$. By 
the definition of~$h^R$, these identical terms $h\big(s(p)\big)$ and $h\big(s'(p)\big)$ already determine 
the~$\Delta$"~parts of the rules used by~$\varrho$ and~$\varrho'$. Moreover, the constraint sets are  
implicit to these terms, therefore~$\varrho$ and~$\varrho'$ can only differ in the states they process. \qed
\end{proof}
%

The next lemma is again a consequence of the tetris-freeness.%
\begin{lemma}\label{lm:E unique}
Let~$\cA'$ 
be the eq-restricted WTAh from
Lemma~\ref{lm:WTAh for hom image} for a WTA and a tetris-free tree homomorphism. If~$\cA'$ has
two rules~$r,r'$ with the same~$\Delta$"~parts, then their constraint sets coincide as well.
\end{lemma}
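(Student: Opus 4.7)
My plan is to trace each rule of $\cA'$ back to its origin rule in $\cA$ via the construction of Lemma~\ref{lm:WTAh for hom image}, and then force the images $h(\sigma_i)$ to coincide by confronting tetris-freeness with a pair of witness trees. Let $r_1 = \ell_1 \stackrel{E_1}{\longrightarrow} q_1$ and $r_2 = \ell_2 \stackrel{E_2}{\longrightarrow} q_2$ be rules of $\cA'$ with $q_1, q_2 \neq \bot$ and $\widehat{\ell_1} = \widehat{\ell_2}$. By construction, each $r_i$ stems from some rule $\sigma_i(q_{i,1}, \dotsc, q_{i,k_i}) \to q_i$ of~$\cA$, so that $\ell_i = u_i \llbracket q_{i,1}, \dotsc, q_{i,k_i} \rrbracket$ and $E_i = \bigcup_{j \in [k_i]} \pos_{x_j}(u_i)^2$, where $u_i = h(\sigma_i)$.

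The first observation is that $\widehat{\ell_i}$ equals $u_i$ with every variable occurrence---whether leading or duplicate---rewritten to $\bot$: the leading copy of each $x_j$ is first replaced by a non-$\bot$ state $q_{i,j}$ and then overwritten by $\bot$ when forming the $\Delta$-part, while the duplicate copies were already placed as $\bot$ in $\ell_i$. Consequently, $\widehat{\ell_1} = \widehat{\ell_2}$ forces $u_1$ and $u_2$ to carry the same $\Delta$-labels at the same positions and to have variables at exactly the same positions; only the specific index of the variable occurring at each such position may still differ.

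Next, pick any $b \in T_\Sigma$ (the case $T_\Sigma = \emptyset$ being vacuous) and set $s_i = \sigma_i(b, \dotsc, b)$ with $k_i$ copies of $b$. The tree $h(s_i)$ is obtained from $u_i$ by substituting $h(b)$ for every variable; equivalently, it arises from $\widehat{\ell_i}$ by replacing each $\bot$ with $h(b)$. Hence $\widehat{\ell_1} = \widehat{\ell_2}$ immediately yields $h(s_1) = h(s_2)$. Tetris-freeness now supplies (i) $\pos(s_1) = \pos(s_2)$, which forces $k_1 = k_2$, and (ii) $h(s_1(\varepsilon)) = h(s_2(\varepsilon))$, i.e.\ $h(\sigma_1) = h(\sigma_2)$, so $u_1 = u_2$ as elements of $T_\Delta(X)$. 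Therefore $E_1 = \bigcup_j \pos_{x_j}(u_1)^2 = \bigcup_j \pos_{x_j}(u_2)^2 = E_2$, as required.

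The main obstacle is just pinning down the characterization of $\widehat{\ell_i}$ as the variable-forgetful shadow of $u_i$; once that is precise, the reduction to tetris-freeness is short and mechanical. Note that the fact that duplicated variables and distinct variables become indistinguishable in $\widehat{\ell_i}$ is precisely what the tetris-free hypothesis is designed to repair---without it, $h(\sigma) = f(x_1,x_2,x_1)$ and $h(\sigma') = f(x_1,x_2,x_2)$ would yield identical $\Delta$-parts but incompatible constraint sets.
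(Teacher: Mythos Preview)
Your proof is correct and follows essentially the same approach as the paper: both arguments trace the rules back to symbols~$\sigma_1,\sigma_2\in\Sigma$, substitute a single fixed tree (you use an arbitrary~$b\in T_\Sigma$, the paper a nullary~$\alpha\in\Sigma_0$) at all variable positions to obtain~$h(s_1)=h(s_2)$, and then invoke tetris-freeness at the root to conclude~$h(\sigma_1)=h(\sigma_2)$ and hence~$E_1=E_2$. The only difference is presentational---the paper extracts the implication ``same $\Delta$-positions and labels $\Rightarrow$ $h(\sigma)=h(\tau)$'' as a standalone property of tetris-free homomorphisms before applying it, whereas you weave this directly into the argument.
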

\begin{proof}
We will infer the statement from a general property of tetris-free tree homomorphisms: 
Let~$h\colon T_\Sigma\to T_\Delta$ be tetris-free, and let~$\sigma,\tau\in\Sigma$ such that
$h(\sigma)$ and~$h(\tau)$ coincide on their~$\Delta$"~positions -- formally, $\pos_\Delta
\big(h(\sigma)\big)=\pos_\Delta\big(h(\tau)\big)$, and $\big(h(\sigma)\big)(p)=\big(h(\tau)\big)(p)$ 
for all~$p\in \pos_\Delta\big(h(\sigma)\big)$ -- then a lready~$h(\sigma)=h(\tau)$.
To see this, note first that since~$\pos_\Delta
\big(h(\sigma)\big)=\pos_\Delta\big(h(\tau)\big)$ and the variables~$x\in X$ are nullary symbols, it 
follows that~$\pos\big(h(\sigma)\big)=\pos\big(h(\tau)\big)$. Next, 
let~$\alpha\in\Sigma_0$ be a nullary symbol. By assumption,~$h\big(\sigma(\alpha,\ldots,
\alpha)\big)=h\big(\tau(\alpha,\ldots,\alpha)\big)$, since the same subtree~$h(\alpha)$ is attached to 
every $X$"~position of~$h(\sigma)$ and~$h(\tau)$, regardless of the particular variable. But 
since~$h$ was assumed to be tetris-free, we infer that ($\sigma$ and~$\tau$ have the same rank 
and)~$h(\sigma)=h(\tau)$. The claim of the lemma follows immediately by applying this property 
of~$h$ to Definition~\ref{def:hR on runs}. \qed
\end{proof}

We are now ready prove that our translation~$\cA'\mapsto\widehat\cA$ is correct:
\begin{lemma}\label{lm:correctness hatA}
The WTA~$\widehat\cA$ from Definition~\ref{def:hatA} is well-defined. The mapping~$t\mapsto\widehat 
t$ induced by it is also well-defined and injective, and 
$\sem{\cA'}(t)=\sem{\widehat\cA}(\widehat t)$.
\end{lemma}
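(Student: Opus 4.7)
We prove the three parts of the lemma — well-definedness of $\widehat\cA$, well-definedness and injectivity of $t \mapsto \widehat t$, and the weight equality — by leveraging Lemmas \ref{lm:left-hand side runs} and \ref{lm:E unique} to eliminate the ambiguity introduced by the projection $\ell \mapsto \widehat\ell$.

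For well-definedness of $\widehat\cA$, every rule $r = \ell \stackrel{E}\longrightarrow q$ of $\cA'$ with $q \neq \bot$ yields one rule in $\widehat R$ by enumerating the non-$\bot$ state positions of $\ell$ in lexicographic order. The risk that two rules with different $E$ but the same $\Delta$-part could collapse into the same rule of $\widehat\cA$ is averted by Lemma \ref{lm:E unique}. Rules that still coincide on left-hand side and target after projection are weight-summed, as is standard.

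For well-definedness of $t \mapsto \widehat t$, the root symbol $\widehat\ell$ of $\widehat t$ and the lex-ordered list of its non-$\bot$ state positions are determined by any run $\varrho$ of $\cA'$ for $t$; Lemma \ref{lm:left-hand side runs} ensures these data are independent of the choice of $\varrho$, and structural induction extends the independence to all of $\widehat t$. For injectivity, suppose $\widehat t = \widehat{t'}$ and induct on height. The common root label fixes the same $\Delta$-structure and the same set of state-placeholder positions; Lemma \ref{lm:E unique} yields the same constraint set $E$; and the lex-minimal specification built into Lemma \ref{lm:WTAh for hom image} pins down the same non-$\bot$ positions $p_1, \ldots, p_k$. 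The inductive hypothesis gives $t|_{p_i} = t'|_{p_i}$ for each such $p_i$. By the eq-restriction, every remaining $\bot$-position lies in the $E$-class of some $p_i$, and since the runs defining $\widehat t$ and $\widehat{t'}$ satisfy $E$, the subtrees at such positions must equal $t|_{p_i}$ and $t'|_{p_i}$, respectively. Hence $t = t'$.

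Finally, for $\sem{\cA'}(t) = \sem{\widehat\cA}(\widehat t)$, we construct for each $q \neq \bot$ a weight-preserving bijection between runs of $\cA'$ for $t$ to $q$ and runs of $\widehat\cA$ for $\widehat t$ to $q$ by mapping each rule of $\cA'$ used in such a run to its counterpart in $\widehat R$. The subruns processing $\bot$-duplicated subtrees contribute only weight-$1$ factors via the sink-state rules $\delta(\bot,\ldots,\bot) \to_1 \bot$, so weights transfer exactly; summing over $q \in F \subseteq Q \setminus \{\bot\}$ gives the claimed equality. The main obstacle is the injectivity argument, where Lemmas \ref{lm:left-hand side runs} and \ref{lm:E unique}, the eq-restriction of $\cA'$, and the lex-minimal convention from Lemma \ref{lm:WTAh for hom image} must be combined carefully to rule out every source of ambiguity.
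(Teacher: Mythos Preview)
Your proof is correct and follows essentially the same approach as the paper: well-definedness of~$\widehat\cA$ via Lemma~\ref{lm:E unique}, well-definedness of~$t\mapsto\widehat t$ via Lemma~\ref{lm:left-hand side runs}, injectivity again via Lemma~\ref{lm:E unique}, and the weight equality via a run-by-run correspondence that discards the weight-neutral~$\bot$-subruns. One minor deviation: where you say ``rules that still coincide on left-hand side and target after projection are weight-summed,'' the paper instead argues that no such residual collision can occur at all, since~$\ell$ is fully recoverable from~$\widehat\ell$, the states~$\seq q1k$, and~$E$ (the last being determined by~$\widehat\ell$ via Lemma~\ref{lm:E unique}); hence the assignment~$\widehat\wt(\widehat\ell(\seq q1k)\to q)=\wt(r)$ in Definition~\ref{def:hatA} is already well-defined without any summing.
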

\begin{proof}
First, recall that~$\ell$ can be recovered from~$\widehat\ell,\, E$ and the states~$\seq q1k$. 
While~$\widehat\ell$ and~$\seq q1k$ are preserved in the rules of~$\widehat\cA$, $E$ is uniquely 
determined by~$\widehat\ell$ as stated in Lemma~\ref{lm:E unique}. Thus the weight 
function~$\widehat\wt$ is well-defined. 

Let~$t\in T_\Delta$. By Lemma~\ref{lm:left-hand side runs}, all runs of~$\cA'$ for~$t$ have the 
same $\Delta$"~parts, and these are precisely the alphabet symbols for~$\widehat t\in T_{\widehat\Delta}$. 
Thus, the mapping~$t\mapsto \widehat t$ is well-defined. Since~$E$ (and thus the positioning of every 
direct subtree) is uniquely determined by~$\widehat\ell$ via Lemma~\ref{lm:E unique}, the mapping is also 
injective. Finally,~$\widehat\cA$ preserves the state behaviour and weights, so every run of~$\cA'$ for~$t$ 
to some~$q\neq\bot$ corresponds to a run of~$\widehat\cA$ for~$\widehat t$ to~$q$, and vice versa, 
which proves the claim.
\end{proof}
For illustration purposes, consider cases where Lemmas~\ref{lm:left-hand side runs} and~\ref{lm:E unique} 
do not hold.%
\begin{example}
Recall the WTAh~$\cA'$ recognizing~$h(\sem\cA)$ from Examples~\ref{ex:first ex} and~\ref{ex:hom image}.
Since~$h$ is tetris-free,~$\cA'$ satisfies Lemmas~\ref{lm:left-hand side runs} and~\ref{lm:E unique}. If we 
add~$\varphi^{(2)}$ to the input alphabet $\Sigma$, extend~$\cA$ to, say,~$\cB$ by adding the 
rule~$\varphi(q,q)\to_{-2} q_f$, and extend~$h$ to~$h_\star$ via~$h_\star(\varphi)=f\big(x_1,\; g(a,x_2),\; 
g(a,x_1)\big)$, then~$h_\star$ is not tetris-free. The eq-restricted WTAh~$\cB'$ for~$h_\star(\sem\cB)$ has 
the rule~$f\big(q,\; g(a,q),\; g(a,\bot)\big)\stackrel{1=32}\longrightarrow_{-2} q_f$, which allows an 
additional run~$\varrho_\star$ for our tree~$t=f\big(a,\;g(a,a),\;g(a,a)\big)$:
	 \begin{center} \scalebox{.94}{
		\begin{tikzpicture}
	\node at (2.5,0) (label) {$\varrho_\star\colon$};
	\node at (7, .05)  (b) {
				\begin{forest}
					for tree={%
						l sep=0.1cm,
						s sep=0.4cm,
						minimum height=0.000008cm,
						minimum width=0.000015cm,
					}
	[$f\big(q\mathpunct{,} \: g(a\mathpunct{,} q)\mathpunct{,} \:g(a\mathpunct{,}\bot) \big)\stackrel{1=32}
	\longrightarrow_{-2} q_f$
	[$a\to_1 q$] [$g$[$a$][$a\to_1 q$] ]
		 [$g$[$a$][$a\to_1 \bot$] ]]
				\end{forest}
			};			
	\node at (9.93,-1.068) (dot) {.};
	\end{tikzpicture} }
\end{center}
It is~$\wt(\varrho)+\wt(\varrho_\star)=0$, hence~$t\notin\supp(\sem{\cB'})$. The rules 
at~$\varrho(\varepsilon)$ and~$\varrho_\star(\varepsilon)$ have different~$\Delta$"~parts, so the 
statement in~Lemma~\ref{lm:left-hand side runs} does not hold. Indeed if we construct~$\widehat\cB$, we 
obtain the new symbol~$\big[f\big(\bot,\: g(a,\bot),\: g(a,\bot)\big)\big]^{(2)} \in\widehat{\Delta}$ which 
provides a second tree~$\widehat{t}_\star\in T_{\widehat{\Delta}}$ related to~$t$:
	 \begin{center} \scalebox{.94}{
		\begin{tikzpicture}
	\node at (4.5,-.4) (label) {$\widehat{t}_\star\colon$};
	\node at (7, .05)  (b) {\scalebox{.65}{
				\begin{forest}
					for tree={%
						l sep=0.1cm,
						s sep=0.4cm,
						minimum height=0.000008cm,
						minimum width=0.000015cm,
					}
	[$f$[$\bot$][$g$[$a$][$\bot$]] [$g$[$a$][$\bot$]] ]
				\end{forest}
			}};			
			\node at (6.8, -1.3)  (b) {
				\begin{forest}
					for tree={%
						l sep=0.1cm,
						s sep=0.4cm,
						minimum height=0.000008cm,
						minimum width=0.000015cm,
					}
	[\textcolor{white}{.}[\textcolor{white}{mm}$a$\textcolor{white}{mm}]
	[\textcolor{white}{mm}$a$\textcolor{white}{mm}]]
				\end{forest}
			};	
\end{tikzpicture} }
\end{center}
So, while the translation~$t\mapsto\widehat{t}$ is still injective, it is not well-defined anymore. 
Moreover, it is~$\widehat{t},\,\widehat{t}_\star\in\supp(\sem{\widehat{\cB}})$, despite~$t\notin\supp(
\sem{\cB'})$. 

On the other hand, instead of~$\varphi^{(2)}$ let us add~$\beta^{(0)}$ and~$\kappa^{(2)}$ 
to~$\Sigma$, and~$b$ to~$\Delta$. We extend~$\cA$ to, say,~$\cC$ by adding the 
rules~$\beta\to_{1} q$ and~$\kappa (q,q)\to_1 q_f$, and~$h$ to~$h^\star$ by setting~$h^\star(\beta)=b$ 
and~$h^\star(\kappa)=f(x_2,\,x_1,\,x_2)$. As before, $h^\star$ is not tetris-free. The WTAh~$\cC'$ has, 
compared to~$\cA'$, the additional rules~$b\to_{1} q$, $\,b\to_1\bot$ and~$f(q,q,\bot)\stackrel{1=3}
\longrightarrow_1 q_f$, so it does not satisfy Lemma~\ref{lm:E unique}. When constructing 
$\widehat{\cC}$, we only add the symbol~$b$ to~$\widehat{\Delta}$, but now there are two different 
rules whose~$\Delta$"~part is~$f(\bot,\bot,\bot)$.  It is~$h^\star\big(\kappa(\alpha,\beta)\big)= f(b,a,b)
\neq f(b,a,a)=h^\star\big(\psi(\alpha,\beta)\big)$; however, we have~$\widehat{f(b,a,b)}=\widehat{f(b,a,a)}
=\big[f(\bot,\bot,\bot)\big](b,a)$. Not only is it unclear which weight the rule~$\big[f(\bot,\bot,\bot)\big]
(q,q)\to q_f$ should have in~$\widehat{\cC}$, but because the translation~$t\mapsto\widehat{t}$ is not 
injective, we cannot recover $\sem{\cC'}$~from~$\sem{\widehat{\cC}}$ anymore.
\end{example}

Next, we want to derive a pumping lemma for our WTAh~$\cA'$, which will be the foundation for deciding 
the weighted HOM-problem over fields. To this end, we apply the well-known pumping lemma for WTA proved by Berstel and Reutenauer~\cite{berreu82} to the WTA~$\widehat{\cA}$. 
We require one more definition: that of a context.%
\begin{definition}
Let~$\Delta$ be a ranked alphabet and~$\Box\notin\Delta$. Any~$C\in T_\Delta(\{\Box\})\setminus 
T_\Delta$ is 
called a~\emph{multi-context}. If~$\abs{\pos_\Box(C)}=1$, then~$C$ is a~\emph{context}. For a 
multi-context~$C$ with $\pos_\Box(C)=\{\seq p1n\}$ and~$\seq t1n\in T_\Sigma(\{\Box\})$, we 
abbreviate $C[t_1]_{p_1}\dotsm [t_n]_{p_n}$ to~$C[t_1,\ldots,t_n]$, and if~$t_1=\ldots=t_n=t$, we 
simply write~$C[t]$.%
\end{definition}
Let us now recall the pumping lemma for WTA over fields with a slight adjustment, namely that the 
pumping takes place below a certain position.%
\begin{theorem}[cf.~\protect{\cite[Theorem~9.2]{berreu82}}]\label{thm:berreu}
Let~$\mathbb{F}$ be a field, $\Sigma$ a ranked alphabet and~$\cB$ a WTA over~$\mathbb{F}$ 
and~$\Sigma$. There exists~$N\in\N$ s.t.\ for every context~$C$ and~$t_0\in T_\Sigma$ such 
that~$C[t_0]\in\supp(\sem\cB)$ and~$\he(t_0)\geq N$, 
there exists 
a sequence of pairwise distinct trees~$t_1,t_2,\ldots$ such that~$C[t_i]\in\supp(\sem{\cB})$ for 
all~$i\in \N$.
\end{theorem}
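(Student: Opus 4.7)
The plan is to reduce the theorem to the classical pumping lemma for recognizable tree series over fields from~\cite{berreu82}, applied to a parametric series derived from $\cB$ and $C$. The key observation is that the pumping constant depends only on $|Q|$ and not on $C$, so a single $N$ works uniformly over all contexts.

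For each tree $s$, define the state vector $\mu(s) \in \mathbb{F}^Q$ by $\mu(s)_q := \wt^q(s)$ (with $\cB$ implicit). For each context $C' \in T_\Sigma(\{\Box\})$, a bottom-up computation on $C'$ starting from the $\Box$-position produces a row vector $\lambda_{C'} \in \mathbb{F}^Q$ satisfying $\sem{\cB}(C'[s]) = \lambda_{C'} \cdot \mu(s)$ for every $s$. Consequently, the series $S_C \colon t \mapsto \sem{\cB}(C[t]) = \lambda_C \cdot \mu(t)$ is recognizable by a WTA with the same state set $Q$ as $\cB$ (only the final weights differ), so its rank is at most $|Q|$, uniformly in $C$.

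Set $N := |Q| + 1$. Given $C[t_0] \in \supp(\sem{\cB})$ with $\he(t_0) \geq N$, the goal is to show that $\supp(S_C)$ is infinite. Fix a path of length $N$ within $t_0$ starting at its root; let $s_i$ denote the subtree of $t_0$ at the depth-$i$ position along this path and $D_i$ the context such that $t_0 = D_i[s_i]$. The $N+1 > |Q|$ vectors $\mu(s_0), \ldots, \mu(s_N) \in \mathbb{F}^Q$ are then linearly dependent; moreover each is nonzero, as otherwise $\sem{\cB}(C[t_0]) = \lambda_{C \circ D_i} \cdot \mu(s_i)$ would vanish. The classical Berstel-Reutenauer argument now extracts an iterable one-step context $\gamma$ within $t_0$ from this dependence, and constructs pumped trees $t_j$ of strictly increasing heights with $\sem{\cB}(C[t_j]) \neq 0$ for infinitely many $j$, yielding the desired sequence. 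Choosing the path within $t_0$ is precisely what ensures the pumping occurs below the distinguished position, rather than in $C$.

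The main obstacle is this final step: extracting infinitely many nonvanishing weights from finite-dimensional linear dependence. The values $a_j := \sem{\cB}(C[t_j])$ obey a linear recurrence of order $\leq |Q|$ over $\mathbb{F}$ with nonzero initial term, but such a sequence can genuinely vanish past some threshold (for instance, when the iteration map $M_\gamma$ acts nilpotently on the pumped state vector). The Berstel-Reutenauer resolution invokes the Jordan decomposition of $M_\gamma$ over the algebraic closure $\overline{\mathbb{F}}$: splitting the pumped state vector into nilpotent and invertible components allows one to isolate the invertible part, on which the recurrence is reversible and hence can vanish identically only if the invariant for the dependence is trivial, which is ruled out by $a_0 \neq 0$. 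This guarantees infinitely many indices $j$ with $a_j \neq 0$, completing the construction.
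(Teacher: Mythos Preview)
The paper does not prove this theorem at all: it is stated as a recall of \cite[Theorem~9.2]{berreu82}, with only the cosmetic adjustment that the pumping is required to take place inside a prescribed subtree~$t_0$. So there is no paper-side argument to compare against beyond this reformulation.

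Your reduction via the series $S_C(t) = \sem{\cB}(C[t])$ is exactly the right way to handle that adjustment, and it matches what the paper does implicitly by citing the result: since $S_C$ is recognised by a WTA with the same state set~$Q$ as~$\cB$ (only the final weights change to~$\lambda_C$), the pumping constant of Berstel--Reutenauer applied to~$S_C$ is bounded by a function of~$|Q|$ alone, hence uniform in~$C$. From there one may simply invoke \cite{berreu82} for~$S_C$ and be done.

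Where your sketch becomes problematic is in the attempt to \emph{reprove} the Berstel--Reutenauer core. From the linear dependence of $\mu(s_0),\ldots,\mu(s_N)$ you do not get a single iterable context~$\gamma$: the path supplies distinct one-step contexts $\gamma_1,\ldots,\gamma_N$, and nothing you wrote selects one of them. If you simply fix some~$\gamma:=\gamma_i$ and pump it, the Jordan-decomposition argument you outline does not close the gap. Writing the pumped vector as $w = w_{\mathrm{inv}} + w_{\mathrm{nil}}$ along the generalised-eigenspace splitting of~$M_\gamma$, one can perfectly well have $a_0 = v\cdot w \neq 0$ solely because $v|_{V_{\mathrm{nil}}}\cdot w_{\mathrm{nil}} \neq 0$, while $v|_{V_{\mathrm{inv}}} = 0$; then $a_k = 0$ for all~$k$ beyond the nilpotency index. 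So ``$a_0\neq 0$'' does \emph{not} rule out eventual vanishing, contrary to what you claim. The genuine Berstel--Reutenauer argument exploits the whole chain of subspaces along the path (a stabilisation argument), not the spectral theory of a single~$M_\gamma$; you should either cite it, as the paper does, or reproduce that stabilisation argument in full rather than the single-matrix sketch.
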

From this, we obtain the desired pumping lemma for the WTAh~$\cA'$.%
%
\begin{proposition}[Pumping Lemma]\label{prop:pumping}
Let~$\mathbb{F}$ be a field and $\cA'$ the eq-restricted 
WTAh from Lemma~\ref{lm:WTAh for hom image} for a WTA over~$\mathbb{F}$ and~$\Sigma$, and a 
tetris-free tree homomorphism~$h\colon T_\Sigma\to T_\Delta$. There exists~$N\in\N$ such that for 
every multi-context~$C$ and~$t_0\in T_\Delta$ such that~$t:=C[t_0]\in\supp(\sem{\cA'})$, $\he(t_0)\geq N$,
and~$\pos_\Box(C)$ is an equivalence class of 
mutually constrained positions in~$t$, there exist infinitely many pairwise distinct 
trees~$t_1,t_2,\ldots$ such that~$C[t_i]\in\supp(\sem{\cA'})$ for all~$i\in \N$.%
\end{proposition}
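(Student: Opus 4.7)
The plan is to lift the pumping property of the auxiliary WTA $\widehat\cA$ back to $\cA'$ via the correctness of the translation $u\mapsto\widehat u$ established in Lemma~\ref{lm:correctness hatA}. Let $N_{\widehat\cA}\in\N$ be the pumping constant granted by Theorem~\ref{thm:berreu} applied to $\widehat\cA$, and let $H$ be the maximum height of a $\Delta$"~part $\widehat\ell$ over all rules of $\cA'$. I will argue that $N=H\cdot N_{\widehat\cA}$ works.

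I first analyse how the decomposition $t=C[t_0]$ translates under $u\mapsto\widehat u$. Because $\cA'$ is eq-restricted, within each equivalence class of mutually constrained positions exactly one position---the lexicographically smallest, henceforth called the \emph{leading} one---is processed by a non-$\bot$ state, while all other positions are absorbed into the $\Delta$"~part of the rule firing at their common parent. Since $\pos_\Box(C)$ is assumed to form an entire equivalence class, the translation collapses all of $\pos_\Box(C)$ to a single position in $\widehat t$. Hence there is a genuinely single-holed context $\widehat C\in T_{\widehat\Delta}(\{\Box\})$ with $\widehat t=\widehat C[\widehat{t_0}]$, and Lemma~\ref{lm:correctness hatA} provides $\widehat t\in\supp(\sem{\widehat\cA})$. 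Since each symbol of $\widehat\Delta$ can condense at most $H$ layers of $T_\Delta$ into one, the bound $\he(t_0)\geq N$ forces $\he(\widehat{t_0})\geq N_{\widehat\cA}$.

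I then apply Theorem~\ref{thm:berreu} to $\widehat\cA$, $\widehat C$ and $\widehat{t_0}$, obtaining pairwise distinct $\widehat{v_1},\widehat{v_2},\ldots\in T_{\widehat\Delta}$ with $\widehat C[\widehat{v_i}]\in\supp(\sem{\widehat\cA})$. Each such tree carries an accepting run $\varrho_i$ of $\widehat\cA$. Since $\widehat\cA$ preserves the states, weights and $\Delta$"~parts of the rules of $\cA'$, and since by Lemma~\ref{lm:E unique} the constraint set of a rule is already determined by its $\Delta$"~part, each node of $\varrho_i$ uniquely identifies a rule of $\cA'$. I reassemble $\varrho_i$ into a run of $\cA'$ of the same weight by pasting, at every $\bot$"~position of each used left-hand side, a copy of the subtree sitting at the leading position of the corresponding constraint class; by construction this run respects all equality constraints, so the resulting tree lies in $\supp(\sem{\cA'})$ by Lemma~\ref{lm:correctness hatA}.

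The main obstacle is the final bookkeeping: verifying that the reconstructed tree really factors as $C[t_i]$ for some $t_i\in T_\Delta$, rather than merely having the correct $\widehat\cdot$-image. This is exactly where the hypothesis that $\pos_\Box(C)$ is an \emph{entire} equivalence class pays off---the leading position of that class coincides with the single hole of $\widehat C$, so the reconstruction places the same tree $t_i$ (characterised by $\widehat{t_i}=\widehat{v_i}$) at every position of $\pos_\Box(C)$, while every other position of the reconstructed tree matches $C$ verbatim. Thus $C[t_i]\in\supp(\sem{\cA'})$, and the injectivity of $u\mapsto\widehat u$ from Lemma~\ref{lm:correctness hatA} converts the pairwise distinctness of the $\widehat{v_i}$ into that of the $t_i$, completing the argument.
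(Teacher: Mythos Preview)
Your proposal is correct and follows essentially the same route as the paper: translate $t=C[t_0]$ to $\widehat t=\widehat C[\widehat{t_0}]$ via Definition~\ref{def:hatA}, apply Theorem~\ref{thm:berreu} to~$\widehat\cA$, and pull the pumped sequence back using Lemma~\ref{lm:correctness hatA}. Your choice of $N=H\cdot N_{\widehat\cA}$ matches the paper's $N=\widehat N\cdot\max_{\sigma\in\Sigma}\he(h(\sigma))$, and your explicit reconstruction of runs and verification that the result factors as $C[t_i]$ simply spells out what the paper compresses into the single line ``each~$\widehat t_i$ has a unique preimage~$t_i$ under the mapping~$t\mapsto\widehat t$, and~$\widehat C$ translates uniquely back to~$C$.''
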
%
\begin{proof}
Recall from Lemma~\ref{lm:left-hand side runs} that the equivalence relation of positions that are mutually 
constrained by a run of~$t$, is uniquely determined by~$t$.
Let~$\widehat\cA$ be the WTA for~$\cA'$ from Definition~\ref{def:hatA} and let~$\widehat N$ be the 
pumping constant for~$\widehat \cA$ from Theorem~\ref{thm:berreu}. We set~$N=\widehat{N}\cdot
\max_{\sigma\in\Sigma} \he\big(h(\sigma)\big)$. Let~$t$ be as in the statement, then~$\widehat{t}\in
\supp(\widehat\cA)$ is of the form~$\widehat{t}=\widehat{C}[\widehat{t_0}]$ with a context~$\widehat C$ 
and~$\he(\widehat{t_0})\geq 
\widehat{N}$. Thus by Theorem~\ref{thm:berreu}, there is a sequence of trees~$\widehat{t}_1,
\widehat{t}_2,\ldots$ such that~$\widehat{C}[\widehat{t}_i]\in\supp(\sem{\widehat{\cA}})$ for all~$i\in\N$. 
In turn, each~$\widehat{t}_i$ has a unique preimage~$t_i$ under the mapping~$t\mapsto \widehat{t}$, 
and~$\widehat{C}$ translates uniquely back to~$C$. Thus we obtain~$t_i,\, i\in\N$ with~$\sem{\cA'}
\big(C[t_i] \big)=\sem{\widehat\cA}\big(\widehat{C[t_i]}\big)=\sem{\widehat{\cA}}\big(\widehat{C}
[\widehat{t}_i]\big)\neq 0$ for all~$i\in\N$.\qed
\end{proof}
\section{The Tetris-Free Weighted HOM-Problem}\label{sec:hom}

In this section, we prove that the weighted HOM-problem over fields, restricted 
to tetris-free homomorphisms, is decidable. Formally, we show the following result.%
\begin{theorem}\label{thm:HOM}
Let~$\mathbb{F}$ be a field, $\cA$ a WTA over~$\mathbb{F}$ and~$\Sigma$, and~$h\colon T_\Sigma
\to T_\Delta$ a tetris-free tree homomorphism. It is decidable whether~$h(\sem\cA)$ is regular.
\end{theorem}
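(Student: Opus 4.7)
The plan is to reduce the weighted HOM-problem to a finiteness property of the WTAh $\cA'$ obtained via Lemma~\ref{lm:WTAh for hom image}, and then decide that property by means of the pumping lemma (Proposition~\ref{prop:pumping}). Concretely, I would set $\cA' = \big(Q \dot{\cup} \{\bot\}, \Delta, F, R, \wt\big)$ as the eq-restricted WTAh recognizing $h(\sem{\cA})$. For each rule $r = \ell \stackrel{E}\longrightarrow q$ of $\cA'$ with $q \neq \bot$ and each equivalence class $[p]_{\equiv_E}$ of size at least two, let $L_{r,[p]}$ denote the set of $\Delta$-trees that appear at the positions of $[p]_{\equiv_E}$ in an accepting run of $\cA'$ that applies $r$ somewhere. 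The property I would show equivalent to regularity of $h(\sem{\cA})$ is: \emph{every $L_{r,[p]}$ is finite.}

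Decidability of this finiteness property would follow directly from the pumping lemma. Let $N$ be the pumping constant from Proposition~\ref{prop:pumping}. Then $L_{r,[p]}$ is finite if and only if it contains no tree of height $\geq N$: the existence of such a tree makes the pumping lemma applicable with the hole-class precisely $[p]_{\equiv_E}$, yielding infinitely many witnesses, while the set of $\Delta$-trees of height $< N$ is trivially finite. Testing the condition reduces to an emptiness question on a WTG over $\mathbb{F}$ obtained from $\cA'$ by pinning $r$ at a distinguished node and restricting the relevant subtree to height $\geq N$, which is decidable.

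For the easier direction — all $L_{r,[p]}$ finite implies $h(\sem{\cA})$ regular — I would eliminate the constraints of $\cA'$ by enumeration: each rule $r$ with non-trivial $E$ is replaced by a family of new rules, one per assignment of a concrete subtree from $L_{r,[p]}$ to each non-trivial class $[p]$, where these subtrees are hard-coded at the corresponding positions (absorbing the leading-state weight $\wt^{q_i}$ on each enumerated subtree into the new rule's weight). The resulting automaton is a classical WTG, hence equivalent to a WTA, and recognizes the same tree series as $\cA'$.

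The main obstacle is the converse: showing that if some $L_{r,[p]}$ is infinite, then $\sem{\cA'}$ is not regular. I would argue via the Hankel-space characterization of regular tree series over a field that underlies Theorem~\ref{thm:berreu}. Fix $[p] \supseteq \{p_1, p_2\}$, choose pairwise distinct $t_1, t_2, \ldots \in L_{r,[p]}$ together with accepting runs witnessing them, and from each build a context $D_i$ with a single hole at $p_2$ while fixing $t_i$ at $p_1$ and at every further position of $[p]_{\equiv_E}$. Eq-restriction then forces any accepting run of $\cA'$ on $D_i[s]$ that uses $r$ at the designated node to satisfy $s = t_i$. The delicate step is to rule out that \emph{other} runs on $D_i[s]$ — those not applying $r$ there — contribute non-zero weight for $s \neq t_i$, so that the functionals $s \mapsto \sem{\cA'}(D_i[s])$ are linearly independent and witness infinite Hankel rank. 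Tetris-freeness is essential here: by Lemmas~\ref{lm:left-hand side runs} and \ref{lm:E unique}, the $\Delta$-part and constraints of any applicable rule at the designated node are pinned down by the tree itself, leaving a tightly controlled pool of competing runs whose cancellation can be ruled out by enlarging the outer part of $D_i$ if necessary.
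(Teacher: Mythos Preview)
Your overall strategy matches the paper's: construct~$\cA'$, reduce regularity to a finiteness condition on constrained subtrees (the paper calls it the \emph{large duplication property}), decide that condition, and prove equivalence via a linearisation for one direction and a rank argument for the other. The easy direction and the decidability sketch are essentially the paper's, though you should be explicit that ``emptiness on a WTG'' really means passing through the WTA~$\widehat{\cA}$ of Definition~\ref{def:hatA} and testing whether a difference of regular series is zero---over a field this needs minimisation, not Boolean emptiness, and the passage to~$\widehat{\cA}$ is exactly where tetris-freeness is consumed. Relatedly, be careful that your~$L_{r,[p]}$ is phrased via \emph{accepting runs}; over a field this is strictly weaker than~$t\in\supp(\sem{\cA'})$, which is what the pumping lemma actually pumps. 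As stated, your equivalence can fail: one can have infinitely many accepting runs with large constrained subtrees that all cancel.

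The genuine gap is the hard direction. Your plan is to build single-hole contexts~$D_i$ with~$t_i$ hard-coded at all positions of~$[p]$ except the hole~$p_2$, and then argue~$\sem{\cA'}(D_i[s])=0$ for~$s\neq t_i$ because Lemmas~\ref{lm:left-hand side runs} and~\ref{lm:E unique} pin down the constraint at the designated node. But those lemmas fix the $\Delta$-part and constraint \emph{for a given tree}; they say nothing across the family~$D_i[s]$ as~$s$ varies. Example~\ref{ex:subsequence} is a tetris-free instance where~$\cA'$ has two root rules with different $\Delta$-parts~$f(\bot,\bot)$ and~$f(\bot,g(\bot))$, so~$D_i[s]=f(t_i,s)$ is accepted both for~$s=t_i$ and for~$s=g(t_i)=t_{i+1}$. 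Your diagonal claim~$\phi_i(t_j)=\delta_{ij}\mu_i$ is therefore false, and the vague fallback ``enlarging the outer part of~$D_i$'' does not address it. The paper handles exactly this obstruction: it splits~$\cA'=\cA_1+\cA_2$ so that~$\cA_1$ captures the diagonal behaviour, then uses \textsc{Ramsey}'s theorem on the finite set of partial run-patterns of~$\cA_2$ on the context to extract a subsequence along which~$\sem{\cA_2}(C[t_k,t_h,\ldots,t_h])=0$ for~$k\neq h$, and only then runs a multilinear-algebra contradiction against a putative WTA for~$\sem{\cA'}$. Some mechanism of this strength---Ramsey-type subsequence selection or an explicit rank argument that tolerates banded, non-diagonal matrices---is needed, and your proposal does not supply it.
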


The approach to prove this result is quite natural: Nonregularity of~$h\big(\sem\cA\big)$ is reduced to 
the following decidable property of the WTAh~$\cA'$ for~$h(\sem\cA)$.%
\begin{definition}[see~\protect{\cite[Definition~10]{integer-hom}}]\label{def:LDP}
Let~$\cA'=\big(Q\dot\cup\{\bot\},\Delta,F,R,\wt\big)$ be the eq-restricted WTAh from 
Lemma~\ref{lm:WTAh for hom image} for a WTA over a field 
and a tetris-free tree homomorphism, and let~$N$ be the pumping constant of~$\cA'$. We say 
that~$\cA'$ has the~\emph{large duplication property (LDP)} if there exists~$t\in\supp(\sem{\cA'})$ with 
an accepting run~$\varrho$, a position~$p\in \pos_R(\varrho)$ where~$\varrho(p)$ has a nontrivial 
constraint set~$E$, and a position~$p'$ that is constrained by~$E$ such that~$\he(t|_{pp'})\geq N$.
\end{definition}

A constraint that only acts on finitely many trees is expendable, since we can process these particular trees 
manually using additional states. If, however,~$\cA'$ has the LDP, then by our pumping lemma we obtain 
infinitely many 
trees to which a nontrivial constraint~$E$ applies, 
so we cannot bypass~$E$. Thus, the LDP indicates that the constraints are indeed indispensable for 
representing $\sem{\cA'}$, and in turn these constraints cause nonregularity, as stated in Proposition~\ref{prop:erik}. 

The decision procedure of~\cite{integer-hom} for input~$\cA$ and~$h$ as above is now as follows.
\begin{enumerate}
\item Construct an eq-restricted WTAh~$\cA'$
recognizing~$h(\sem\cA)$ via Lemma~\ref{lm:WTAh for hom image}.
\item If~$\cA'$ has the~LDP, then~$h\big(\sem{\cA}\big)$
is not regular.
\item If~$\cA'$ does not have the~LDP, then~$h\big(\sem\cA\big)$ is 
regular.
\end{enumerate}

For this procedure to be correct, the LDP must be (i) decidable and (ii) equivalent to the 
nonregularity of~$\sem{\cA'}$. While proving (ii) only requires technical adaptations compared 
to~\cite{integer-hom}, (i) presents new challenges since the pumping lemma for fields is 
weaker. We prove (i) indirectly by examining the WTA~$\widehat{\cA}$.%
\begin{restatable}{proposition}{propLDPdecid}\textbf{\emph{(cf.~\protect{\cite[Lemma~11]{integer-hom}})}}\label{prop:LDPdecid}
Given as input a WTA~$\cA$ over a field, and a tetris-free tree homomorphism~$h$, it is 
decidable whether the eq-restricted WTAh~$\cA'$ for~$h(\sem\cA)$ from 
Lemma~\ref{lm:WTAh for hom image} has the LDP.
\end{restatable}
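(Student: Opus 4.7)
The plan is to reduce the LDP on $\cA'$ to a zero-equivalence question for a finite family of WTA derived from $\widehat \cA$, then invoke the decidability of zero-equivalence for WTA over fields.

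First, I would rephrase the LDP as a pattern condition on $\widehat \cA$. By the eq-restriction, every nontrivial constraint class of a rule $r \in R$ contains a unique leading position labelled with a non-$\bot$ state, while all other positions of the class carry $\bot$; in any valid run, the subtrees at all positions of the class must coincide with the one at the leading position. Under the translation $t \mapsto \widehat t$ of Lemma~\ref{lm:correctness hatA}, the leading position becomes one of the child arguments of the symbol $\widehat \ell \in \widehat \Delta$. Hence the LDP for $\cA'$ is equivalent to the following pattern on some $\widehat t \in \supp(\sem{\widehat \cA})$: there exist a position $\pi$ and a child index $i$ with $\widehat t(\pi) = \widehat \ell$ such that (a) $\widehat \ell$ encodes a nontrivial constraint relation $E$ (a property of $\widehat \ell$ alone, by Lemma~\ref{lm:E unique}); (b) $i$ corresponds to the leading position of a nontrivial class of $E$; and (c) the subtree $\widehat t|_{\pi i}$ expands to a subtree of $t$ of height at least $N$. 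Since the symbols of $\widehat \Delta$ have bounded depth, a standard bottom-up deterministic tree automaton on $T_{\widehat \Delta}$ can compute the expanded height of every subtree, so the set $L_{\widehat \ell, i}$ of $\widehat t$ satisfying this pattern is a regular tree language.

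Second, for each of the finitely many pairs $(\widehat \ell, i)$ as above, I would construct the WTA $\cB_{\widehat \ell, i}$ over $\mathbb F$ as the product of $\widehat \cA$ with a DFA recognizing $L_{\widehat \ell, i}$. This product satisfies $\sem{\cB_{\widehat \ell, i}}(\widehat t) = \sem{\widehat \cA}(\widehat t)$ if $\widehat t \in L_{\widehat \ell, i}$ and $0$ otherwise. Consequently, the LDP for $\cA'$ holds if and only if at least one $\cB_{\widehat \ell, i}$ is not equivalent to the zero series. Equivalence, and therefore zero-equivalence, of WTA over a field is decidable by Schützenberger's classical minimization: one computes a minimal linear representation by Gaussian elimination and checks whether its dimension is zero. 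Running this test on each of the finitely many $\cB_{\widehat \ell, i}$ thus yields a decision procedure for the LDP.

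The main obstacle is the first step, namely showing that $L_{\widehat \ell, i}$ captures the LDP exactly. Tetris-freeness is crucial here in two ways: via Lemma~\ref{lm:left-hand side runs}, which guarantees that every run of $\cA'$ on a fixed $t$ uses the same $\Delta$-parts, so the translation $t \mapsto \widehat t$ is well-defined and the subtree whose height we need to bound is canonically identified; and via Lemma~\ref{lm:E unique}, which ensures that $\widehat \ell$ alone determines the constraint structure, so membership in $L_{\widehat \ell, i}$ is a genuine property of $\widehat t$ rather than of a run. Without these, the LDP witness would depend on run-specific information not recoverable from $\widehat t$, and the reduction to zero-equivalence in $\widehat \cA$ would break down; once the reduction is in place, the remaining ingredients are standard.
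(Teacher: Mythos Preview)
Your proposal is correct and follows essentially the same route as the paper: translate the LDP to a condition on~$\widehat\cA$ via Lemmas~\ref{lm:left-hand side runs} and~\ref{lm:E unique}, encode the ``expanded height~$\geq N$ below a constrained position'' condition by a finite-state height counter, and reduce to zero-equivalence of WTA over a field (decided by minimization). The only cosmetic difference is packaging: the paper builds a single height-bounded copy~$\widehat\cB$ of~$\widehat\cA$ (restricting constrained arguments to expanded height~$<N$) and tests whether $\sem{\widehat\cA}-\sem{\widehat\cB}=0$, whereas you split by the finitely many pairs~$(\widehat\ell,i)$ and test each product~$\cB_{\widehat\ell,i}$ for zero---dual formulations of the same reduction.
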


\begin{proof} 
Adopting the notation from Definition~\ref{def:LDP}, let~$t_0=t|_{pp'}$. We will not decide the existence of  
a tree~$t=C[t_0]$ in~$\supp(\sem{\cA'})$ as in the LDP directly, but instead decide whether its 
counterpart~$\widehat{C[t_0]}$ exists in~$\supp(\sem{\widehat \cA})$. 
Consider thus the WTA~$\widehat\cA$ for~$\cA'$ constructed in Definition~\ref{def:hatA}. We 
modify~$\widehat\cA$ by implementing a counter 
into its state set, which ensures that only trees of height less than~$N$ are 
attached to positions that are constrained in~$\cA'$. Then we check if any trees have been lost 
from~$\supp(\sem{\widehat\cA})$ in the process. If so, then the counterparts of these lost trees in~$\cA'$ 
confirm the LDP, otherwise~$\cA'$ does not have the LDP. 

Formally, let~$q\neq\bot$ and~$\ell\stackrel{E}\longrightarrow q$ a rule of~$\cA'$ 
with~$\pos_{Q\setminus\{\bot\}}(\ell)=\{\seq p1k\}$ ordered lexicographically, and~$\ell(p_i)=q_i$ for 
all~$i\in[k]$. Suppose that~$p_{i_1},\ldots,p_{i_j}$ are the positions constrained by~$E$. Then~$\cA'$ has 
the rule~$\widehat{\ell}(\seq q1k)\to q$, and we replace it by the collection of all~$\widehat{\ell}
\big(\langle q_1,n_1\rangle,\ldots,\langle q_k,n_k\rangle\big)\to\langle q,n\rangle$ such that~$\seq n1k,
n\in[N]$, $n=\min\big\{\max_{i\in[k]}(n_i+\abs{p_i}),\,N\big\}$ and~$n_{i_1},\ldots,n_{i_j} <N$. All these 
new rules have the same weight as~$\widehat{\ell}(\seq q1k)\to q$. This operation is well-defined since 
by~Lemma~\ref{lm:E unique}, the constraint~$E$ is uniquely determined by~$\widehat{\ell}$. We 
proceed this way for every rule of~$\widehat{\cA}$ and denote the resulting WTA by~$\widehat\cB$. 

Consider now the WTA recognizing~$\sem{\widehat{\cA}}-\sem{\widehat{\cB}}$ defined by a disjoint 
union. Subtracting~$\sem{\widehat{\cB}}$ removes all~$\widehat{t}$ from~$\supp(\sem{
\widehat{\cA}})$ s.t.\ all subtrees of~$t\in\supp(\sem{\cA'})$ pending from
constrained positions are of height less than~$N$; thus, the WTA for~$\sem{\widehat{\cA}}-
\sem{\widehat{\cB}}$ only accepts trees whose counterparts in~$\cA'$ satisfy the LDP.
It remains to decide whether~$\sem{\widehat{\cA}}-\sem{\widehat{\cB}}$ is the zero function by minimizing 
the WTA for it~\cite{boz91,boz89fr} and checking whether it has zero states. If indeed~$\sem{\widehat{
\cA}}=\sem{\widehat\cB}$, then~$\cA'$ has no tree that satisfies the condition of the LDP. If, however, 
there exists~$\widehat{t}\in \supp(\sem{\widehat\cA})\setminus\supp(\sem{\widehat\cB})$, then its counterpart~$t$ satisfies the LDP. \qed
\end{proof}

Finally, we can complete the proof of Theorem~\ref{thm:HOM}. For proving the final proposition, we apply the following version of
\textsc{Ramsey}'s theorem~\cite{ramsey1930}.  For a set~$X$, we denote
by~$X \choose 2$ 
the set of all subsets of~$X$ of size~$2$.%
\begin{theorem}
  Let~$k \geq 1$ be an integer and~$f \colon {\N \choose 2} \to [k]$
  a mapping.  There exists an infinite subset~$E
  \subseteq \N$ such that~$f|_{E\choose 2}\equiv i$
  for some~$i\in [k]$.
\end{theorem}
\begin{proposition}\textbf{\emph{(cf.~\protect{\cite[Prop.~13~and~Thm.~17]{integer-hom}})}}
\label{prop:erik}
Let~$\cA$ be a WTA over a field~$\mathbb{F}$,~$h$ a tetris-free tree homomorphism, and~$\cA'$ the 
WTAh for~$h(\sem\cA)$ constructed in Lemma~\ref{lm:WTAh for hom image}. Then~$h(\sem\cA)$ is  
regular iff~$\cA'$ does not have the LDP.
\end{proposition}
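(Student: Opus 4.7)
The plan is to prove both directions of the equivalence separately. The easy direction, ``no LDP implies regularity'', is essentially a finite-state simulation. The hard direction, ``LDP implies non-regularity'', combines the Pumping Lemma of Proposition~\ref{prop:pumping} with Ramsey's theorem and a linear-algebraic cancellation argument in order to handle the fact that we work over a field.

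For the easy direction, suppose $\cA'$ does not have the LDP. Then in every accepting run of $\cA'$ and at every nontrivially constrained rule, each subtree pending from a constrained position has height strictly less than the pumping constant~$N$, so it belongs to the finite set $U \subseteq T_\Delta$ of trees of height $< N$. I would construct an equivalent constraint-free WTA $\cA''$ by splitting each rule $\ell \stackrel{E}{\longrightarrow}_a q$ with nontrivial~$E$ into a finite family of rules, one per choice of a tree in~$U$ for each $E$-equivalence class, and by enriching the state space so that sink-processed copies carry the identity of the subtree they produced. The formerly constrained equalities are then deterministically enforced by the states, so $\cA''$ has no constraints and recognises $\sem{\cA'} = h(\sem\cA)$, witnessing regularity.

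For the hard direction, suppose the LDP holds, witnessed by $t = C[t_0]$, an accepting run $\varrho$, and a constrained equivalence class $\pos_\Box(C) = \{\seq\pi 1n\}$ with $n \geq 2$. Proposition~\ref{prop:pumping} produces pairwise distinct trees $t_1, t_2, \ldots$ with $C[t_i] \in \supp(\sem{\cA'})$ for every $i \in \N$, and in each $C[t_i]$ the same tree $t_i$ occupies every position $\seq\pi 1n$ simultaneously. Assume for contradiction that $h(\sem\cA)$ is recognised by a WTA $\cB$, and let $\vec v_i \in \mathbb{F}^{|Q_\cB|}$ collect the values $\wt_\cB^q(t_i)$ over the states~$q$ of $\cB$. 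For any $i \neq j$, the ``mixed'' tree $s_{i,j} := C[t_j, t_i, \ldots, t_i]$ violates the equality constraint, so $\sem{\cA'}(s_{i,j}) = \sem{\cB}(s_{i,j}) = 0$, whereas $\sem{\cB}(C[t_i]) = \sem{\cA'}(C[t_i]) \neq 0$. Because $\cB$'s weight on any tree of shape $C[u_1, \ldots, u_n]$ is a fixed multilinear form $F$ in the $n$ state-weight vectors attached at $\seq\pi 1n$, these facts rewrite as $F(\vec v_i, \vec v_i, \ldots, \vec v_i) \neq 0$ and $F(\vec v_j, \vec v_i, \ldots, \vec v_i) = 0$ for all $i \neq j$.

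Colouring the pair $\{i, j\}$ by the finitely many joint zero/nonzero patterns of $F$ on combinations of $\vec v_i, \vec v_j$ and applying Ramsey's theorem, I extract an infinite monochromatic subset $E \subseteq \N$. On $E$ the finite dimension of $\mathbb{F}^{|Q_\cB|}$ forces a linear dependence $\vec v_{i^*} = \sum_k \alpha_k \vec v_{j_k}$ with $i^* \in E$ and $j_k \in E \setminus \{i^*\}$; multilinearity in the first coordinate of $F$ then yields $F(\vec v_{i^*}, \vec v_{i^*}, \ldots, \vec v_{i^*}) = \sum_k \alpha_k F(\vec v_{j_k}, \vec v_{i^*}, \ldots, \vec v_{i^*}) = 0$, contradicting $\sem{\cB}(C[t_{i^*}]) \neq 0$. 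The main obstacle is precisely this cancellation-robust step: in the zero-sum-free setting of~\cite{integer-hom} the mere existence of infinitely many witnesses suffices, but over a field distinct runs of $\cB$ on the same tree can cancel, so the contradiction must be assembled from Ramsey's combinatorial uniformity together with linear algebra in $\mathbb{F}^{|Q_\cB|}$. Choosing the colouring so that the monochromatic subfamily really admits this linear-algebraic propagation, uniformly in the (possibly large) equivalence class $\pos_\Box(C)$, is the delicate point of the proof.
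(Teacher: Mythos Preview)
Your easy direction is fine and matches the paper's linearisation construction.

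The hard direction has a genuine gap. You assert that for $i\neq j$ the mixed tree $s_{i,j}=C[t_j,t_i,\ldots,t_i]$ ``violates the equality constraint, so $\sem{\cA'}(s_{i,j})=0$''. This is not justified and is in general false. The equivalence class $\pos_\Box(C)=\{\seq\pi1n\}$ is the set of positions that are constrained \emph{in the runs of~$\cA'$ for the tree $C[t_i]$}. By Lemma~\ref{lm:left-hand side runs} the constraint structure is determined by the input tree, and $s_{i,j}$ is a \emph{different} tree: its runs may use rules with different $\Delta$-parts and hence different constraints, so $\cA'$ may accept it with nonzero weight. Example~\ref{ex:subsequence} exhibits exactly this phenomenon: with $C=f(\Box,\Box)$ and $t_i=g^i(a)$, the tree $C[t_i,t_{i+1}]=f\big(g^i(a),g(g^i(a))\big)$ is accepted via the rule $f\big(q,g(\bot)\big)\stackrel{1=21}{\longrightarrow}q_f$, which constrains positions $1$ and $21$ rather than $1$ and $2$.

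This is precisely why the paper's proof is more elaborate. It first decomposes $\cA'=\cA_1+\cA_2$ so that $\cA_1$ has a \emph{single} final rule carrying the constraint on $\pos_\Box(C)$; then $\sem{\cA_1}$ does vanish on mixed trees, but $\sem{\cA_2}$ need not. Ramsey's theorem is applied not to your hypothetical WTA~$\cB$ but to the finitely many \emph{partial runs of $\cA_2$ restricted to $\pos(C)$}, yielding a subsequence on which $\cA_2$ also vanishes on mixed trees (Part~2 of the paper's proof). Only after this step does one have $\sem{\cA'}(C_{kh})=\delta_{kh}\mu_k$ with $\mu_k\neq 0$, and then your multilinear/finite-dimension argument goes through---indeed, at that point Ramsey is no longer needed for the linear algebra. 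Your placement of Ramsey (colouring by zero/nonzero patterns of $F$) does not address the actual obstacle, since if your mixed-tree claim held, the contradiction would follow without any Ramsey step at all.
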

\begin{proof}
Let~$\cA'=\{Q\dot\cup\{\bot\},\Delta,F,R,\wt\}$ and let~$N$ be its pumping constant. \newline
\emph{\textbf{Necessity.}} We begin with the easier direction of this reduction: Suppose first that~$\cA'$ 
does not have the LDP. Therefore, every constraint used in a run of some~$t\in\supp(\sem{\cA'})$ only 
applies to subtrees of height less than~$N$. We will construct a WTG (without constraints), called 
the~\emph{linearization of~$\cA'$}, that is equivalent to~$\cA'$. It was first defined for the unweighted 
case in~\cite{godoy2013hom} and adapted to the weighted setting in~\cite{integer-hom}.
Formally, the~\emph{linearization  of~$\cA'$} is the WTG $lin(\cA')= \big(Q, \Delta, F,  R_{lin},
\wt_{lin}\big)$, where~$R_{lin}$~and~$\wt_{lin}$ are defined as follows. 
 
 For~$\ell' \in T_\Delta(Q)$ and~$q \in Q$, we include the rule~$(\ell' \to q) $ in $R_{lin}$ 
 iff there exist a rule~$(\ell\stackrel E\longrightarrow q) \in R$, positions~$\seq p1k \in \pos_{Q \dot\cup 
 \{\bot\}}(\ell)$, and trees~$\seq t1k \in T_\Delta$ such that
  \begin{itemize}
  \item $\{\seq p1k\} = \bigcup_{p \in \pos_\bot(\ell)} [p]_{\equiv E}$, that is, $\seq p1k$ are exactly the 
  positions constrained by~$E$,
  \item $(p_i, p_j) \in E$ implies $t_i = t_j$ for all~$i,j \in[k]$,
  \item $\ell' = \ell[t_1]_{p_1} \dotsm [t_k]_{p_k}$, and
  \item $\wt^{\ell(p_i)}(t_i) \neq 0$ and~$\he(t_i) < N$ for all~$i \in [k]$.
  \end{itemize}
  For every such production~$\ell' \to q$ we set~$\wt_{lin}(\ell' \to q)$ 
  as the sum over all weights
  \[ \wt(\ell \stackrel E\longrightarrow q) \cdot \prod_{i \in
      [k]} \wt^{\ell({p_i})}(t_i) \] for all~$(\ell \stackrel  E\longrightarrow q) \in R$, 
      $\;\seq p1k \in \pos_{Q \dot\cup\{\bot\}}(\ell)$ and~$\seq t1k \in T_\Delta$ as above.
In other words, $lin(\cA')$ simulates all runs of~$\cA'$ which only enforce the 
equality of subtrees of height less than~$N$.  
This is achieved by instantiating the constrained $Q$"~positions of every rule~$\ell \stackrel E
\longrightarrow q$ in~$\cA'$ with compatible trees of height less than~$N$, while the
$Q$"~positions of~$\ell$ that are unconstrained by~$E$ remain unchanged. Since~$lin(\cA')$ has no 
constraints and is equivalent to~$\cA'$, we have found a regular representation of the tree 
series~$h(\sem\cA)$. 

\emph{\textbf{Sufficiency.}} The other direction of the proof is significantly more challenging. We divide the 
statement into three parts: Recall that from the LDP for some tree~$t=C[t_0]$ together with the pumping 
lemma (Proposition~\ref{prop:pumping}) we obtain a sequence of pairwise distinct trees~$t_0,t_1,t_2,
\ldots$ such that~$t[t_i]_{pp'}\in\supp(\sem{\cA'})$ and the position~$p'$ is constrained in the rule at~$p$
(in every run of~$\cA'$ for~$t$). 
First, we decompose~$\cA'$ as~$\cA_1+\cA_2$ such that~$\cA_1$ isolates accepting runs of~$\cA'$ for 
these trees, which will be the basis for the nonregularity of~$\cA'$. As a second step, we 
identify a subsequence where $\cA_2$ behaves almost like~$\cA_1$. 
Finally, in the third part, we prove nonregularity of~$\sem{\cA'}$ by contradiction using linear algebra 
computations similar to the initial sigma algebra semantics for WTA~\cite{fulvog09}, which we perform on 
the subsequence identified in the second step.

Before we begin, we want to introduce an alternative notation for runs which will come in handy in the 
remainder of the proof. Consider~$t\in\supp(\sem{\cA'})$ and a run~$\varrho$ of~$\cA'$ for~$t$. 
Let~$\pos_R(\varrho)=\{\seq p1m\}$ ordered lexicographically, but such that prefixes are larger, i.e.\ 
$p_m=\varepsilon$. Other than thinking of~$\varrho$ as a tree in~$T_{\Delta\cup R}$, we can list the rules 
it applies as~$\big(\varrho(p_1),p_1\big)\big(\varrho(p_2),p_2\big)\cdots\big(\varrho(p_m),p_m\big)$, 
sometimes simply denoted $(r_1,p_1)(r_2,p_2)\dotsm(r_m,p_m)$ with~$\seq r1m\in R$. Recall that for all 
runs of a fixed tree~$t$, the positions~$\seq p1m$ are the same, as are the positions among them where 
the target state is~$\bot$.

\underline{Part 1.} Consider~$t=C[t_0]$ and~$p,\,p'$ as in the LDP such that~$p$ is of minimal length
among all choices (where~$p'$ is labeled by a non-sink state in the rule applied at~$p$ by any run for~$t$) 
and the trees~$t_0,t_1,t_2,\ldots$ such that~$C[t_i]\in \supp(\sem{\cA'})$ for all~$i\in\N$. We want to 
construct two eq-restricted WTAh~$\cA_1,\cA_2$ such that $\cA'=\cA_1 +\cA_2$, 
and~$\cA_1$ simulates runs of~$\cA'$ for these trees that 
coincide above~$p$. Formally, there exist two eq-restricted WTAh~$\cA_1=\big(Q_1\dot\cup
\{\bot\},\Delta,F_1,R_1,\wt_1\big)$ and $\cA_2=\big(Q_2\dot\cup\{\bot\},\Delta,F_2,R_2,\wt_2\big)$ 
s.t.\ $\cA'(s)=\cA_1(s)+\cA_2(s)$ for all~$s\in T_\Delta$ and~$F_1=\{q_f\}$ for some~$q_f\in Q_1$, 
and there exists exactly one rule in~$R_1$, say~$\ell_f\stackrel{E_f}\longrightarrow q_f$, whose target 
state is~$q_f$, and for this rule there exists~$(p'',p_\bot)\in E_f$ with~$\ell_f(p'')\neq\bot=\ell_f(p_\bot)$ 
such that~$p''\in\pos_\Box(C)$ and~$C[t_i]\in\supp(\sem{\cA_1})$ for infinitely many~$i\in\N$. 

To identify~$\cA_1$, consider all runs of~$\cA'$ for all trees~$C[t_i]$ as in the LDP, and sort them 
into groups by the rules applied on the prefixes~$[\varepsilon,p]:=\{p_{i_1},\ldots,p_{i_k}\}$ of~$p$ in~$
\pos_R(\varrho)$ (in ascending order where prefixes are larger).
By evaluating each group, we obtain an expression of the form~$\wt \big(\varrho|_{[\varepsilon,p]} \big)
\cdot \wt^{q'}(t_i)\cdot \pi_{rest} $, where~$\varrho|_{[\varepsilon,p]}$ is the partial 
run~$\big(\varrho(p_{i_1}),p_{i_1}\big)\big(\varrho(p_{i_2}),p_{i_2}\big)\cdots \big(\varrho(p_{i_k}),p_{i_k}
\big)$ unique to the group, 
$q'$ stands for the respective non-sink state at~$pp'$, and~$\pi_{rest}$ contains the weights of the 
subtrees of~$t$ unrelated to~$t|_{pp'}$ that are attached to the rules~$\varrho(p_{i_1}),\ldots,
\varrho(p_{i_k})$ at positions parallel to~$pp'$. All duplication copies of~$t_i$ are processed in~$\bot$ with 
weight~$1$, so their weight 
can be neglected. Since the field is not zero-sum free, in some of the groups, the values~$\wt^{q'}(t_i)
\cdot \pi_{rest}$ might be zero, but since for every particular tree~$C[t_i]$, the sum of weights over all 
groups is~$\sem{\cA'}\big(C[t_i]\big)\neq 0$, and there are only finitely many groups, for one of the groups 
there must be infinitely many trees~$t_{j_0},t_{j_1}, t_{j_2}\ldots$ such that~$\wt^{q'}(t_{j_i})\cdot\pi_{rest}
\neq 0$ for all~$i\in\N$ and the respective state~$q'$ of that group. We pick this subsequence as our 
new sequence~$t_0,t_1,t_2,\ldots$, and in the following, we will join the partial 
run~$\varrho|_{[\varepsilon,p]}$ of this group into 
the rule~$\ell_f\stackrel{E_f}\longrightarrow q_f$ from the statement at the beginning of Part~1.%

From here on, the proof of Part 1 works the same as in~\cite{integer-hom}. Let~$\varrho(p_{i_j})$ be of 
the form~$\ell_{i_j}\stackrel{E_{i_j}}\longrightarrow q_{i_j}$ for all~$j\in[k]$.  
For a position~$\bar p$ and a 
constraint set~$\bar E$ we define the set~$\bar p\bar E=\{(\bar pp',\bar pp_\bot)\mid (p',p_\bot)\in \bar E
\}$. We want to join the respective left-hand sides~$\ell_{i_1},\ldots,\ell_{i_k}$ of the rules applied 
by~$\varrho$ on the path from~$\varepsilon$ to~$p$, to create a new rule~$\ell_{i_k}
[\ell_{i_{k-1}}]_{p_{i_{k-1}}} \dotsm  [\ell_{i_1}]_{p_{i_1}} \stackrel{E_{ f}}\longrightarrow
  q_{ f}$ with~$E_{ f} = \bigcup_{j \in [k]} p_{i_j}  E_{i_j}$. Note that by the minimality of~$p$, none of 
  the positions~$p_{1_i},\ldots,p_{i_k}$ can occur in~$E_f$. We define~$\cA_1=\big(Q\dot\cup
\{\bot\},\Delta,F_1,R_1,\wt_1\big)$ with~$Q_1=Q\dot\cup\{q_f\}$,~$F_1=\{q_f\}$ and~$R_1=R\cup\{
r_f\}$ where~$r_f$ is the rule~$\ell_{i_k}[\ell_{i_{k-1}}]_{p_{i_{k-1}}} \dotsm  [\ell_{i_1}]_{p_{i_1}} 
\stackrel{E_{ f}}\longrightarrow q_{ f}$ with the constraint set~$E_{f} = \bigcup_{j \in [k]} p_{i_j}  
E_{i_j}$, and the weight function $\wt_1$ is defined by~$\wt_1(r_f)= \prod_{j\in[k]} \wt\big(\varrho(p_{i_j})\big)$, and 
otherwise $\wt_1(r)=\wt(r)$ for all~$r\in R$.

Finally, we construct~$\cA_2$ such that~$\cA'=\cA_1+\cA_2$. For this, we must simulate all runs 
of~$\cA'$ except for those covered by~$\cA_1$. For a compact definition of~$\cA_2$, we use~$\Box$ to 
denote a tree of height~$0$, and a term~$\Box[\ell_{i_k}]_{p_{i_k}} 
\dotsm [\ell_{i_{j+1}}]_{p_{i_{j+1}}} [\ell']_{p_{i_{j}}}$ for~$j = k$ is to be read as~$\Box[\ell']_{p_{i_{j}}}$.  
We let~$q_f\notin Q \cup \{\bot\}$ be a new state and define~$\cA_2 = (Q_2 \dot\cup  \{\bot\}, \Delta, 
F_2, R_2, \wt_2)$ with~$Q_2 = Q \cup  \{q_f\}$, $F_2=\{q_f\}\cup F\setminus\{q_{i_k}\}$ (where $q_{i_k}$ 
is the target state of~$r_{i_k}$ at the root of~$\varrho$), and the following rules:
    \[ R_2 = R \cup \bigcup_{j \in [k]} \Bigl\{\;
      \begin{aligned}[t]
      & \Box[\ell_{i_k}]_{p_{i_k}} \dotsm
      [\ell_{i_{j+1}}]_{p_{i_{j+1}}}[\ell']_{p_{i_{j}}}
      \stackrel{E_f}\longrightarrow q_f \;\Big|\; \\*
       & r' = (\ell' \stackrel{E'}\longrightarrow q_{i_{j}}) \in R
        \setminus \{r_{i_{j}}\},
         E_f = p_{i_{j}} E' \cup \bigcup_{j' = j+1}^k
        p_{i_{j'}}E_{i_{j'}} \Bigr\} \;.
      \end{aligned}
    \]
  For a rule~$r_f = \Box[\ell_{i_k}]_{p_{i_k}} \dotsm
  [\ell_{i_{j+1}}]_{p_{i_{j+1}}}[\ell']_{p_{i_{j}}}
  \stackrel{E_f}\longrightarrow f$ constructed with~$r'$ as above we 
  let~$\wt_2(r_f) = \wt(r') \cdot \prod_{j' = j+1}^k \wt(r_{i_{j'}})$,
  and for every~$r' \in R$ we let~$\wt_2(r') = \wt(r')$.  
This way,~$\cA_2$ reconstructs all runs of~$\cA'$ except for the ones that coincide with~$\varrho$ on 
the path from~$\varepsilon$ to~$p$. An illustration of this construction was given 
in~\cite[Example~15]{integer-hom}.

\underline{Part 2.} Next we identify a subsequence of~$t_0,t_1,t_2,\ldots$ such that~$C[t_i,t_j,\ldots,t_j]$
is not in $\supp(\sem{\cA_2})$ if $i\neq j$. Recall that by the minimality assumption in Part~1, no prefix 
of~$pp'$ is a constrained position in~$t$. Let~$\{\seq w1r\}$ be the set of all positions 
equality constrained to~$p''=pp'$ by~$E_f$ in~$\cA'$, where~$w_1=p''$. i.e.\ $\{\seq w1r\}=\pos_\Box(C)
$. Since~$\ell_f\stackrel{E_f}\longrightarrow q_f$ is the only rule of~$\cA_1$ that targets the final 
state~$q_f$, we have~$\sem{\cA_1}\big(C[t_i,t_j,\ldots,t_j]\big)\neq 0$ iff $i=j$. Of course,~$\cA_2$ 
might have valid runs for~$C[t_i]$, but also for~$C[t_i,t_j,\ldots,t_j]$ with~$i\neq j$. We will show 
that there is a subsequence of~$t_0,t_1,t_2,\ldots$ where also~$\sem{\cA_2}\big(C[t_i,t_j,\ldots,t_j]\big)
= 0$ if $i\neq j$. Example~\ref{ex:subsequence} below shows an illustration of this selection.

For a run~$\vartheta=\big(r_1,p_1\big)\dotsm\big(r_m,p_m\big)$ of~$\cA_2$ and a set~$S$, let~$\{
p_{i_1},\ldots,p_{i_n}\}$ be the set $\{\seq p1m\}\cap S$, then~$\vartheta|_S$ denotes the restricted run~$
\big(r_{i_1},p_{i_1}\big)\dotsm\big(r_{i_m},p_{i_m}\big)$; its weight is~$\wt_2(\vartheta|_S)=\prod_{j\in[n]}
\wt_2(r_{i_j})$, and we define for all~$k,h\in \N$:
\[\Theta_{kh} = \big\{\vartheta|_{\pos(C)} \mid \vartheta\text{ is accepting run of }\cA_2 \text{ for }
C[t_k,t_h,\ldots,t_h]\big\}\;. \]

We now employ \textsc{Ramsey}'s theorem in the following way.
For~$k,h \in \N$ with~$k < h$, we consider the mapping~$\{k, h\}
\mapsto \Theta_{kh}$.  This mapping has a finite range as every~$\Theta_{kh}$
is a set of finite words over the alphabet~$R_2 \times \pos(C)$ of
length at most~$\abs{\pos(C)}$.  Thus, by \textsc{Ramsey}'s theorem,
we obtain a subsequence~$(t_{i_j})_{j \in \N}$ with~$\Theta_{i_ki_h} =
\Theta_<$ for all~$k, h \in \N$ and some set~$\Theta_<$.  For simplicity, we
assume that~$\Theta_{kh} = \Theta_<$ for all~$k, h \in \N$ with~$k <
h$.  In the same fashion, we may select a further subsequence and
assume that~$\Theta_{kh} = \Theta_>$ for all~$k, h \in \N$ with $k > h$.
Finally, the mapping~$k \mapsto \Theta_{kk}$ also has a finite range, so
by the pigeonhole principle, we may select a further subsequence and
assume that~$\Theta_{kk} = \Theta_=$ for all~$k \in \N$ and some set~$\Theta_=$.
In the following, we show that~$ \Theta_< =  \Theta_>=\emptyset$. For this, we prove that
$ \Theta_< =  \Theta_> \subseteq  \Theta_=\,$; since~$\cA'$ satisfies 
Lemma~\ref{lm:left-hand side runs} 
all runs of~$\cA'$ for~$C[t_k]$ enforce equality for all positions in~$\pos_\Box(C)$. Moreover, the (absolute)
positions constrained by runs of~$\cA_2$ are the same as in the corresponding runs of~$\cA'$. Therefore 
the set~$ \Theta_=$ is disjoint from both~$ \Theta_<$ and~$ \Theta_>$, so overall we have~$\Theta_< =  
\Theta_>=\emptyset$.

 Assume thus that~$\Theta_< \neq \emptyset$. Let~$(r_1, p_1) \dotsm
  (r_m, p_m) \in \Theta_<$ with~$r_i = \ell_i
  \stackrel{E_i}\longrightarrow q_i$ for every~$i \in [m]$.  Moreover,
  we will abbreviate~$C_{kh} = C[t_k, t_h, t_h, \dotsc, t_h]$, $C_{k\Box} =
  C[t_k, \Box, \Box, \dotsc, \Box]$, and~$C_{\Box h} = C[\Box, t_h,
  t_h, \dotsc, t_h]$ for~$k, h \in \N$.  We show that every constraint
  from every~$E_i$ is satisfied on all~$C_{kh}$ with~$k, h \geq 1$,
  not just for~$k < h$.  More precisely, let~$i \in [m]$, $(u',
  v') \in E_i$, and~$(u,v) = (p_iu', p_iv')$.  We show~$C_{kh}|_u =
  C_{kh}|_v$ for all~$k, h \geq 1$.  Note that by assumption,
  $C_{kh}|_u = C_{kh}|_v$~is true for all~$k, h \in \N$ with~$k < h$.
  We show our statement by a case distinction depending on the
  position of $u$~and~$v$ in relation to~$\{\seq w1r\}=\pos_\Box(C)$.

  \begin{enumerate}
  \item If both $u$~and~$v$ are parallel to~$w_1$, then
    $C_{ij}|_u$~and~$C_{ij}|_v$ do not depend on~$i$.  Thus,
    $C_{0j}|_u = C_{0j}|_v$ for all~$j \geq 1$ implies the statement. 
  \item If $u$~is in prefix-relation with~$w_1$ and $v$~is parallel
    to~$w_1$, then $C_{ij}|_v$ does not depend on~$i$.  If~$u \leq
    w_1$, then by our assumption that~$(t_i)_{i \in \N}$ are pairwise
    distinct, we obtain the contradiction~$C_{02}|_v = C_{02}|_u \neq
    C_{12}|_u = C_{12}|_v$, where~$C_{02}|_v = C_{12}|_v$ should
    hold.  Thus, we have~$w_1 \leq u$ and in particular,
    $C_{ij}|_u$~does not depend on~$j$.  Thus, for all~$i,j \geq 1$ we
    obtain $$C_{ij}|_u = C_{i,i+1}|_u = C_{i,i+1}|_v = C_{0,i+1}|_v =
    C_{0,i+1}|_u = C_{0j}|_u = C_{0j}|_v = C_{ij}|_v\;.$$  If $v$~is in
    prefix-relation with~$w_1$ and $u$~is parallel to~$w_1$, then we
    come to the same conclusion by formally exchanging $u$~and~$v$ in
    this argumentation.
  \item If $u$~and~$v$ are both in prefix-relation with~$w_1$, then
    $u$~and~$v$ being parallel to each other implies $w_1 \leq
    u$~and~$w_1 \leq v$.  In particular, both $u$~and~$v$ are parallel
    to all~$\seq w2m$.  Thus, we obtain, as in the first case,
    that $C_{ij}|_u$~and~$C_{ij}|_v$ do not depend on~$j$ and the
    statement follows from~$C_{i,{i+1}}|_u = C_{i,{i+1}}|_v$ for
    all~$i \in \N$.
  \end{enumerate}
  
  Let~$k, h \geq 1$ and~$\vartheta_C \in \Theta_<$. Moreover, let~$q \in Q_2$ and 
  let~$\vartheta_{k,k+1}$ and~$\vartheta_{h-1, h} $ be runs of~$\cA_2$ for~$C_{k,k+1}$ 
  and~$C_{h-1,h}$, respectively, to~$q$ such that $$\vartheta_C = \vartheta_{k, k+1}|_{\pos(C)} = 
  \vartheta_{h-1,h}|_{\pos(C)}\;.$$
  Let~$\vartheta_k = \vartheta_{k,k+1}|_{\pos(C_{k,k+1}) \setminus \pos(C_{\Box,k+1})}$ 
  and~$\vartheta_h = \vartheta_{h-1,h}|_{\pos(C_{h-1,h})\setminus\pos(C_{h-1,\Box})}$,
  then we can reorder~$\vartheta = \vartheta_k \vartheta_h \vartheta_C$ to a run
  of~$\cA_2$ for~$C_{kh}$, as all equality constraints from~$\vartheta_k$ are
  satisfied by the assumption on~$\vartheta_{k,k+1}$, all equality constraints
  from~$\vartheta_h$ are satisfied by the assumption on~$\vartheta_{h-1,h}$, and all
  equality constraints from~$\vartheta_C$ are satisfied by our case
  distinction.  Considering the special cases~$k = 2$, $h = 1$, and~$k
  = h = 1$, and the definitions of $\Theta_>$~and~$\Theta_=$, we obtain~$\vartheta_C \in
  \Theta_{21} = \Theta_>$ and~$\vartheta_C \in \Theta_{11} = \Theta_=$, and hence, 
  $\Theta_< \subseteq  \Theta_>$~and~$\Theta_< \subseteq \Theta_=$.
   
  The converse inclusion~$\Theta_> \subseteq \Theta_<$ follows with an analogous reasoning: Suppose 
  again that~$\Theta_>\neq\emptyset$ and consider as before some~$(r_1,p_1) \dotsm (r_m,p_m) \in
  \Theta_>$, a pair~$(u', v')$ constrained in~$r_i$ for
  some~$i\in [m]$, and let~$(u, v) = (p_iu', p_iv')$.  By
  assumption, $(u,v)$~is satisfied by~$C_{kh}$ for all~$k > h$.
  Again, we distinguish three cases depending on the position of~$u$
  and~$v$ compared to~$w_1$.  In the first and third case, we draw
  the conclusions from~$C_{j+1,j}|_u = C_{j+1,j}|_v$ for all~$j \geq
  0$ and~$C_{i0}|_u = C_{i0}|_v$ for all~$i \geq 1$, respectively.  In
  the second case, if~$u$ is in prefix-relation with~$w_1$ and $v$~is
  parallel to~$w_1$, we first see that $u$~is not a prefix of~$w_1$.
  Otherwise we would again have a contradiction via~$C_{20}|_v =
  C_{20}|_u \neq C_{10}|_u = C_{10}|_v$.  Then we argue similarly that
  for all~$i, j \geq 1$ we have
  \[ C_{ij}|_u = C_{i0}|_u  = C_{i0}|_v=C_{j+1,0}|_v = C_{j+1,0}|_u =
    C_{j+1,j}|_u = C_{j+1,j}|_v = C_{ij}|_v \;. \]
  In conclusion, we obtain~$\Theta_< = \Theta_> \subseteq \Theta_=$.  
  Since by Lemma~\ref{lm:left-hand side runs}, all partial runs in~$\Theta_=$ enforce constraints 
  on all~$\seq w1r$, it is~$ \Theta_=\cap\Theta_<=\Theta_=\cap \Theta_>
  =\emptyset$, and thus we conclude~$\Theta_< =  \Theta_>=\emptyset$ as desired.
  
  \underline{Part 3.} Finally, we derive a representation of~$\sem{\cA'}$ that allows us to prove 
  its nonregularity. For~$k\in\N$ let~$\nu_k=\sem{\cA_2}(C_{kk})$, then it is~$\sem{\cA_2}(C_{kh})
  =\delta_{kh}\nu_k$, where~$\delta_{kh}$ denotes the \textsc{Kronecker} delta. As mentioned 
  above, we similarly have~$\sem{\cA_1}(C_{kh})\neq 0$ iff~$k=h$, so we can overall write~$\sem{\cA'}
  (C_{kh})=\delta_{kh}\mu_k$ with~$\mu_k\neq 0$ for all~$k\in\N$. If~$\sem{\cA'}$ is regular, then 
  by the initial sigma algebra semantics~\cite{fulvog09} we can assume a representation~$\sem{\cA'}
  (C_{kh})= g(\kappa_k,\kappa_h,\ldots,\kappa_h)$ for all~$k,h\in\N$, where~$\kappa_h$ is a 
  finite vector of weights over~$\mathbb{F}$ where each entry corresponds to the sum of all runs 
  for~$t_h$ to a specific state of a WTA recognizing~$\sem{\cA'}$ by the regularity assumption, and~$g$ 
  is a multilinear 
  map encoding the weights of the runs for~$C$ depending on the specific input states at 
  the~$\Box$"~positions and the target state at~$\varepsilon$. Let~$dim$ be the number of entries 
  in~$\kappa_h$, then all~$\kappa_h,\,h\in\N$ are elements of the finite-dimensional vector 
  space~$\mathbb{F}^{dim}$. We choose~$K\in\N$ such that~$\seq\kappa 1K$ are a generating set 
  of the~$\mathbb{F}$"~vector space spanned by~$\kappa_i,\,i\in\N$. Then there are 
  coefficients~$\seq\alpha 1K\in\mathbb{F}$ such that~$\kappa_{K+1}=\sum_{i\in[K]}\alpha_i\kappa_i$.
  Thus we compute \begin{align*}
     0\neq\mu_{K+1} =\sem{\cA'}(C_{K+1,K+1}) &= g(\kappa_{K+1}, \kappa_{K+1},  \ldots,\kappa_{K+1} )
     \\ &=\sum_{i\in[K]} \alpha_i g(\kappa_i,\kappa_{K+1},\ldots,\kappa_{K+1}) \\ &=\sum_{i\in[K]} \alpha_i 
     \sem{\cA'}(C_{i,K+1}) = \sum_{i\in[K]}\alpha_i \delta_{i,K+1}\mu_i = 0, 
\end{align*}
so our assumption that~$\sem{\cA'}$ is regular led to a contradiction. \qed
\end{proof}

To illustrate why it is necessary to identify such a subsequence in the proof of Proposition~\ref{prop:erik}, consider the following simple example.
\begin{example}\label{ex:subsequence}
Consider the WTAh~$\cA'=\big(\{q,q_f,\bot\},\{a^{(0)},g^{(1)},f^{(2)}\},\{q_f\},R,\wt\big)$ 
over~$\mathbb{Q}$ with the following rules:
\begin{align*}
R=\{\;a\to_1 q, && g(q)\to_2 q, && f(q,\bot)\stackrel{1=2}\longrightarrow_1 q_f, && f\big(q,g(\bot)\big)
\stackrel{1=21} \longrightarrow_1 q_f \;\} \;\cup\; R_\bot
\end{align*}
where~$R_\bot=\{a\to_1\bot, g(\bot)\to_1\bot,f(\bot,\bot)\to_1\bot\}$. The WTAh~$\cA'$ represents
 the image of a WTA under a suitable tetris-free tree homomorphism, and the context~$C=f(\Box,\Box)$ 
 and the sequence~$t_i=g^i(a)$ satisfy the conditions of the LDP. For constructing~$\cA_1$ we can 
 simply choose the third rule as it is. However, the corresponding WTAh~$\cA_2$ does not 
 satisfy~$\sem{\cA_2}\big(C[t_i,t_j]\big) = 0$ if $i=j$. Instead, for every~$i$, it is~$\sem{\cA_2}\big(
 C[t_i,t_{i+1}]\big) \neq 0$. However, if we choose the subsequence~$(t_{2i}),\,i\in\N$, then indeed~$
 \sem{\cA_2}\big(C[t_{2i},t_{2j}]\big)=0$ (in particular) for all~$i\neq j$ as required for the computations in Part 3.
\end{example}

Restrictihg the HOM-problem over fields to tetris-free tree homomorphisms is of essence: On the one 
hand, we use this assumption to construct a well-defined WTA~$\widehat{\cA}$ when proving that the 
LDP is decidable in Proposition~\ref{prop:LDPdecid}. On the other hand, the statement of
Proposition~\ref{prop:erik}, which reduces the weighted HOM-problem to the LDP, also does not hold if~$h$ is not tetris-free:%
\begin{example}
Consider~$\cA'=\big(\{q,q_f,q_f',\bot\},\{a^{(0)},g^{(1)},f^{(2)}\},\{q_f,q_f'\},R,\wt\big)$ 
with
\begin{align*}
R=\{\quad a\to_1 \:&q, && \qquad g(q)\to_1 q, &&\quad f(q,q)\to_3 q_f,\\ &\, && f(q,\bot)\stackrel{1=2}
\longrightarrow_2 q_f, && f(q,\bot)\stackrel{1=2} \longrightarrow_{-2} q_f'  \quad\} \;\cup\; R_\bot
\end{align*}
where~$R_\bot=\{a\to_1\bot, g(\bot)\to_1\bot,f(\bot,\bot)\to_1\bot\}$. 

The WTAh~$\cA'$ represents
 the image of a WTA under a suitable tree homomorphism, but not under any tetris-free one since~$\cA'$
 does not satisfy Lemma~\ref{lm:left-hand side runs}. It is easy to see that~$\cA'$ has the LDP, e.g.\ 
 with~$C=f(\Box,\Box)$ and the sequence~$t_i=g^i(a)$. However, the accepting runs for~$C[t_i]$ 
 that use constraints cancel each other out. Despite~$\cA'$ having the LDP, $\sem{\cA'}$ is the regular 
 tree series with~$\supp(\sem{\cA'})=\big\{t\mid \pos_f(t)=\{\varepsilon\}\big\}$ and~$\sem{\cA'}\colon f
 \big(g^i(a),  g(^j(a)\big) \mapsto 3$ for all $i,j\in\N$. Thus, without the tetris-free assumption, 
 Proposition~\ref{prop:erik} does not hold.%
\end{example} 
\section{Conclusion}
In this paper, we have proved that the weighted HOM-problem over fields for tetris-free tree 
homomorphisms is decidable. Formally, for a WTA~$\cA$ over a field, and a tetris-free tree 
homomorphism~$h$ as input, it is decidable whether~$h(\sem{\cA})$ is again regular. A tree 
homomorphism is tetris-free if its non-injective behaviour is located only at the symbol level, thus this 
property generalizes injectivity. 

Our proof strategy is similar to~\cite{integer-hom}: We have reduced the HOM-problem to 
a decidable property of (the WTAh that recognizes)~$h(\sem\cA)$. The 
homomorphism~$h$ has the ability to duplicate subtrees of its input trees, and we have shown 
that~$h(\sem\cA)$ is regular iff~$h$ duplicates only finitely many subtrees of trees accepted by~$\cA$. 
This limited duplication is in turn decidable, and proving its decidability is our main contribution. 
For this, we presented a pumping lemma for the WTAh recognizing~$h(\sem\cA)$, by translating it 
into a WTA and applying the pumping lemma for WTA over fields proved in~\cite{berreu82}.

The analogous decision problem, without the tetris-free restriction, is also decidable for WTA 
over~$\N$~\cite{integer-hom}. However, since fields allow zero-sums, the proof strategy fails without 
the tetris-free restriction, as our last example illustrates.%
\bibliographystyle{splncs04}
\bibliography{references}

\begin{thebibliography}{10}
\providecommand{\url}[1]{\texttt{#1}}
\providecommand{\urlprefix}{URL }
\providecommand{\doi}[1]{https://doi.org/#1}

\bibitem{berreu82}
Berstel, J., Reutenauer, C.: Recognizable formal power series on trees.
  Theoret.\@ Comput.\@ Sci.  \textbf{18}(2),  115--148 (1982)

\bibitem{boz91}
Bozapalidis, S.: Effective construction of the synthetic algebra of a
  recognizable series on trees. Acta Informatica  \textbf{28}(4),  351--363
  (1991)

\bibitem{boz89fr}
Bozapalidis, S., Alexandrakis, A.: Repr{\'e}sentations matricielles des
  s{\'e}ries d'arbre reconnaissables. RAIRO-Theoretical Informatics and
  Applications  \textbf{23}(4),  449--459 (1989)

\bibitem{graphautomata}
Bozapalidis, S., Kalampakas, A.: Graph automata. Theoretical Computer Science
  \textbf{393}(1-3),  147--165 (2008). \doi{10.1016/j.tcs.2007.11.022}

\bibitem{tataok}
Comon, H., Dauchet, M., Gilleron, R., L\"oding, C., Jacquemard, F., Lugiez, D.,
  Tison, S., Tommasi, M.: Tree automata --- {T}echniques and applications
  (2007)

\bibitem{hom2012exp}
Creus, C., Gasc{\'o}n, A., Godoy, G., Ramos, L.: The {HOM} problem is
  {EXPTIME}-complete. In: Proc.\@ 27th Annual IEEE Symp.\@ Logic in Computer
  Science. pp. 255--264. IEEE (2012). \doi{10.1109/LICS.2012.36}

\bibitem{doner1970tree}
Doner, J.: Tree acceptors and some of their applications. J.\@ Comput.\@ System
  Sci.  \textbf{4}(5),  406--451 (1970). \doi{10.1016/S0022-0000(70)80041-1}

\bibitem{drewes2006grammatical}
Drewes, F.: Grammatical picture generation. Springer (2006)

\bibitem{droste2009handbook}
Droste, M., Kuich, W., Vogler, H.: Handbook of weighted automata. Springer
  (2009)

\bibitem{droste2021weighted}
Droste, M., Kuske, D.: Weighted automata. (2021)

\bibitem{fulvog09}
F{\"u}l{\"o}p, Z., Vogler, H.: Weighted tree automata and tree transducers. In:
  Handbook of Weighted Automata \cite{droste2009handbook}, chap.~9, pp.
  313--403. \doi{10.4171/Automata-1/4}

\bibitem{gecste15}
G{\'e}cseg, F., Steinby, M.: Tree automata. Tech. Rep. 1509.06233, arXiv (2015)

\bibitem{giltis95}
Gilleron, R., Tison, S.: Regular tree languages and rewrite systems. Fundamenta
  informaticae  \textbf{24}(1-2),  157--175 (1995)

\bibitem{godoy2013hom}
Godoy, G., Gim{\'e}nez, O.: The {HOM} problem is decidable. J.\@ ACM
  \textbf{60}(4),  1--44 (2013)

\bibitem{godoy2010hom}
Godoy, G., Gim\'enez, O., Ramos, L., \`Alvarez, C.: The hom problem is
  decidable. In: Proc.\@ 42nd ACM Symp.\@ Theory of Computing. pp. 485--494.
  ACM (2010). \doi{10.1145/1806689.1806757}

\bibitem{gol99}
Golan, J.S.: Semirings and their Applications. Kluwer Academic, Dordrecht
  (1999). \doi{10.1007/978-94-015-9333-5}

\bibitem{hebwei98}
Hebisch, U., Weinert, H.J.: Semirings. World Scientific (1998).
  \doi{10.1142/3903}

\bibitem{hebisch1998semirings}
Hebisch, U., Weinert, H.J.: Semirings: algebraic theory and applications in
  computer science, vol.~5. World Scientific (1998)

\bibitem{jurmar08}
Jurafsky, D., Martin, J.H.: Speech and language processing. Prentice Hall
  (2008)

\bibitem{dlt22}
Maletti, A., N{\'a}sz, A.T.: Weighted tree automata with constraints. In:
  Developments in Language Theory: 26th International Conference, DLT 2022,
  Tampa, FL, USA, May 9--13, 2022, Proceedings. pp. 226--238. Springer (2022)

\bibitem{WTAc-journal}
Maletti, A., N{\'a}sz, A.T.: Weighted tree automata with constraints. Theory of
  Computing Systems pp. 1--28 (2023)

\bibitem{integer-hom}
Maletti, A., N{\'{a}}sz, A., Paul, E.: Weighted hom-problem for nonnegative
  integers (2023). \doi{10.48550/arXiv.2305.04117}

\bibitem{nas23}
N{\'a}sz, A.T.: Solving the weighted hom-problem with the help of unambiguity.
  EPTCS 386 p.~200

\bibitem{infinitewords}
Perrin, D.: Recent results on automata and infinite words. In: Proc.\@ 11th
  Int.\@ Symp.\@ MFCS. {LNCS}, vol.~176, pp. 134--148. Springer (1984)

\bibitem{ramsey1930}
Ramsey, F.P.: On a problem of formal logic. Proc. London Math. Soc  \textbf{30}
  (1930)

\bibitem{rosenfeld2014picture}
Rosenfeld, A.: Picture languages: formal models for picture recognition.
  Academic Press (2014)

\bibitem{salomaa2012automata}
Salomaa, A., Soittola, M.: Automata-Theoretic Aspects of Formal Power Series.
  Texts and Monographs in Computer Science, Springer (1978).
  \doi{10.1007/978-1-4612-6264-0}

\bibitem{schutzenberger1961}
Sch{\"u}tzenberger, M.P.: On the definition of a family of automata. Inform.\@
  and Control  \textbf{4}(2--3),  245--270 (1961).
  \doi{10.1016/S0019-9958(61)80020-X}

\bibitem{schwentick2007}
Schwentick, T.: Automata for xml—a survey. Journal of Computer and System
  Sciences  \textbf{73}(3),  289--315 (2007)

\bibitem{thatcher1965generalized}
Thatcher, J.W.: Characterizing derivation trees of context-free grammars
  through a generalization of finite automata theory. J.\@ Comput.\@ Syst.\@
  Sci.  \textbf{1}(4),  317--322 (1967). \doi{10.1016/S0022-0000(67)80022-9}

\bibitem{thatcher1968generalized}
Thatcher, J.W., Wright, J.B.: Generalized finite automata theory with an
  application to a decision problem of second-order logic. Math.\@ Systems
  Theory  \textbf{2}(1),  57--81 (1968). \doi{10.1007/BF01691346}

\bibitem{wilseihac13}
Wilhelm, R., Seidl, H., Hack, S.: Compiler Design - Syntactic and Semantic
  Analysis. Springer (2013). \doi{10.1007/978-3-642-17540-4}

\end{thebibliography}
\end{document}